\documentclass[a4paper,11pt,centertags,reqno]{amsart}

\usepackage[foot]{amsaddr}
\usepackage{letltxmacro}        						
\usepackage{ifthen}													
\usepackage[headings]{fullpage}					
\usepackage[nodisplayskipstretch]{setspace}						
\usepackage[inline]{enumitem}   							
\usepackage{array}														
\usepackage{fancybox}													    
\usepackage{graphicx}           					
\usepackage{amsmath}	        	
\usepackage{amsthm}							
\usepackage{amssymb}			
\usepackage{mathtools}					
\usepackage{mathrsfs}												
\usepackage{yhmath}							
\usepackage{accents}						       
\usepackage{xfrac}             
\usepackage{url}
\usepackage[all]{xy}						
\usepackage[normalem]{ulem}
\usepackage[round,comma,authoryear,sort&compress]{natbib} 
\usepackage{lmodern}
\usepackage{anyfontsize}
\PassOptionsToPackage{hyphens}{url}
\usepackage[colorlinks,citecolor=blue,urlcolor=blue,unicode,verbose]{hyperref}
\DeclareFontFamily{U}{mathx}{\hyphenchar\font45}
\DeclareFontShape{U}{mathx}{m}{n}{
      <5> <6> <7> <8> <9> <10>
      <10.95> <12> <14.4> <17.28> <20.74> <24.88>
      mathx10
      }{}
\DeclareSymbolFont{mathx}{U}{mathx}{m}{n}
\DeclareFontSubstitution{U}{mathx}{m}{n}
\DeclareMathAccent{\widecheck}{0}{mathx}{"71}
\DeclareMathAccent{\wideparen}{0}{mathx}{"75}

\makeatletter
\newcommand*\rel@kern[1]{\kern#1\dimexpr\macc@kerna}
\newcommand*\widebar[1]{%
  \begingroup
  \def\mathaccent##1##2{%
    \rel@kern{0.8}%
    \overline{\rel@kern{-0.8}\macc@nucleus\rel@kern{0.2}}%
    \rel@kern{-0.2}%
  }%
  \macc@depth\@ne
  \let\math@bgroup\@empty \let\math@egroup\macc@set@skewchar
  \mathsurround\z@ \frozen@everymath{\mathgroup\macc@group\relax}%
  \macc@set@skewchar\relax
  \let\mathaccentV\macc@nested@a
  \macc@nested@a\relax111{#1}%
  \endgroup
}
\makeatother

\newcommand{\N}{\mathbb{N}}

\newcommand{\R}{\mathbb{R}}

\newcommand{\Cx}{\mathbb{C}}

\newcommand{\Rinf}{\widebar{\R}}
\newcommand{\Cinf}{\widebar{\Cx}}
\renewcommand{\Re}{\operatorname{Re}}

\newcommand{\id}{{\operatorname{id}}}

\newcommand{\I}{\mathfrak{I}}
\renewcommand{\U}{\mathcal{U}}
\newcommand{\NaN}{\mathrm{NaN}}


\renewcommand{\log}{\ln}

\renewcommand{\C}[2]{
  \ifthenelse{#1=0 \and #2=0}{\textsf{\upshape C}}
  {\ifthenelse{#2=0}{\textsf{\upshape C}^{#1}}
    {\textsf{\upshape C}^{#1,#2}}}
}



\renewcommand{\d}{\mathrm{d}}

\newcommand{\e}{\mathrm{e}}


\newcommand{\E}{\textsf{\upshape E}}

\renewcommand{\P}{\textsf{\upshape P}}
\newcommand{\Qu}{\textsf{\upshape Q}}

\newcommand{\indicator}[1]{\mathbf{1}_{#1}}

\newcommand{\filt}[1]{\mathfrak{#1}}
\newcommand{\sigalg}[1]{\mathscr{#1}}

\newcommand{\Exp}{\ensuremath{\mathscr{E}}}
\newcommand{\Log}{\ensuremath{\mathcal{L}}}

\newcommand{\lc}{[\![}

\newcommand{\ccol}[1]{\omit\hfil$\displaystyle#1$\hfil\ignorespaces}
\newcommand*{\bigs}[1]{\scalebox{1.2}{\ensuremath#1}}

\DeclarePairedDelimiterX\br[1]{[}{]}{#1}

\DeclarePairedDelimiterX\set[2]\{\}{#1\::\:#2}

\makeatletter
\let\oldr@@t\r@@t
\def\r@@t#1#2{%
  \setbox0=\hbox{$\oldr@@t#1{#2\,}$}\dimen0=\ht0
  \advance\dimen0-0.2\ht0
  \setbox2=\hbox{\vrule height\ht0 depth -\dimen0}%
  {\box0\lower0.4pt\box2}}
\LetLtxMacro{\oldsqrt}{\sqrt}
\renewcommand*{\sqrt}[2][\ ]{\oldsqrt[#1]{#2}}
\makeatother

\theoremstyle{plain}
\newtheorem{theorem}{Theorem}
\newtheorem{lemma}[theorem]{Lemma}
\newtheorem{proposition}[theorem]{Proposition}
\newtheorem{corollary}[theorem]{Corollary}
\theoremstyle{definition}
\newtheorem{definition}[theorem]{Definition}

\newtheorem{example}[theorem]{Example}

\theoremstyle{remark}
\newtheorem{remark}[theorem]{Remark}

\numberwithin{theorem}{section}
\numberwithin{equation}{section}
\numberwithin{figure}{section}
\numberwithin{table}{section}

\graphicspath{{Figures/}}
\pdfoptionpdfminorversion=6

\bibpunct{(}{)}{,}{a}{}{,}

\allowdisplaybreaks[4]
\setstretch{1.25}
\raggedbottom
\makeatletter
\@namedef{subjclassname@2010}{MSC {[2010]}}
\makeatother
\newcommand\shorturl[1]{%
  \href{http://#1}{\nolinkurl{#1}}%
}

\begin{document}
\title[Simplified Stochastic Calculus With Applications]
{Simplified Stochastic Calculus With Applications in Economics and Finance}

\author{Ale\v{s} \v{C}ern\'{y}}

\address{Business School (formerly Cass)\\City, University of London\\
106 Bunhill Row, London, EC1Y 8TZ, UK.}

\email{ales.cerny.1@city.ac.uk}

\author{Johannes Ruf}
\address{Department of Mathematics\\
  LSE\\
	Houghton Street, London, WC2A 2AE, UK.}

\email{j.ruf@lse.ac.uk}

\thanks{Previous versions of this paper were circulated under the title ``Finance without Brownian motions: An introduction to simplified stochastic calculus.'' We thank Jan Kallsen, Jan-Frederik Mai, Lola Martinez Miranda, Johannes Muhle-Karbe, and two anonymous referees for helpful comments.}   

\subjclass[2010]{(Primary) 60H05, 60H10; 60G44, 60G48; (Secondary) 91B02, 91B25; 91G10}

\begin{abstract} 
The paper introduces a simple way of recording and manipulating general stochastic processes without explicit reference to a probability measure. In the new calculus, operations traditionally presented in a measure-specific way are instead captured by tracing the behaviour of jumps (also when no jumps are physically present). The calculus is fail-safe in that, under minimal assumptions, all informal calculations yield mathematically well-defined stochastic processes. The calculus is also intuitive as it allows the user to pretend all jumps are of compound Poisson type. The new calculus is very effective when it comes to computing drifts and expected values that possibly involve a change of measure. Such drift calculations yield, for example, partial integro--differential equations, Hamilton--Jacobi--Bellman equations, Feynman--Kac formulae, or exponential moments needed in numerous applications. We provide several illustrations of the new technique, among them a novel result on the Margrabe option to exchange one defaultable asset for another.\medskip

\noindent\emph{Keywords.}\ finance; drift; \'Emery formula; Girsanov's theorem; simplified stochastic calculus
\end{abstract}

\maketitle

\section{Introduction}\label{sect: intro}

Anyone who has attempted stochastic modelling with jumps will be aware of the sudden increase in mathematical complexity in models that are not of compound Poisson type. The difficulty is such that experienced researchers readily forgo generality in order to reduce the technical burden placed on their readers; see, for example, \citet{feng.linetsky.08.opre}, \citet{cai.kou.12}, \citet{hong.jin.18}, and \citet{ait-sahalia.matthys.19}.

In this paper we introduce an intuitive calculus that works for general processes but retains the simplicity of compound Poisson calculations. To achieve this, a change of paradigm is required. Classical It\^o calculus is based on decomposing the increments of every process into signal (drift, expected change) and noise (Brownian motion, zero-mean shock). This is at once convenient and mathematically expedient. The convenience of knowing the drift is immediate. Many tasks where stochastic processes are concerned involve computation of the drift of some quantity. Hamilton--Jacobi--Bellman equations in optimal control, for example, express the fact that the optimal value function plus the integrated historical cost is a martingale and therefore has zero drift. Similarly, Feynman--Kac formulae reflect zero drift of an integral of costs discounted at a specified stochastic killing rate. Closer to home, the Black--Scholes\nocite{black.scholes.73} partial differential equation can be obtained by setting the drift of the discounted option price process to zero under the risk-neutral measure. 

The expediency of the signal--noise decomposition comes from the early construction of the It\^o integral where the drift is integrated path-by-path but the Brownian motion integral is performed, loosely speaking, by summing up uncorrelated square-integrable random variables with zero mean. The paradigm shift is applied here: we separate how a process is recorded from the drift calculation. In other words, we do not carry the drift with us at all times but only evaluate it when the drift is really needed. This feels a little uncomfortable at first but there are ample rewards for the small intellectual effort required.

The consequences of the subtle change in perspective are far-reaching. By recording processes in a measure-invariant manner the technicalities of stochastic integration fall away, the importance of Brownian motions and Poisson processes recedes, and one begins to see
deeper into the fundamental relationships among the modelled variables, which now take center stage. Measure change, too, becomes an easy application of the simplified calculus, for as long as the new measure is directly driven by the variables being studied, which is overwhelmingly the case in practice.

We have prepared the paper with two audiences in mind. First and foremost, the paper is intended for the research community whose members do not consider themselves experts in mathematics in general, or stochastic analysis in particular, but who nevertheless use stochastic calculus as a modelling tool. To this readership we want to demonstrate that the new calculus is easy to understand and apply in practice.  Second, but no less important, we address colleagues specializing in stochastic analysis whom we wish to convince that all our arguments are mathematically rigorous.

The stated goal is not without its challenges. Where practical, we contrast the new approach with the more involved classical notation. In order to perform such comparison, one has to introduce some advanced concepts, such as the Poisson random measure, which are needed in classical stochastic calculus. Plainly, the lay readers will not be acquainted with some of the advanced concepts, nor do they need to be. Familiarity with Brownian motion, compound Poisson processes, some version of It\^o’s formula, and perhaps Girsanov’s theorem ought to be enough to sufficiently appreciate the backdrop against which the new calculus is constructed. The new calculus itself only needs a grasp of drift, volatility, and jump arrival intensity, plus three basic rules that are self-evident on an informal level.

The paper is organized as follows. In the rest of this introduction, we trace how the novel concept of this paper, the semimartingale representation \eqref{eq:190924.5}, arises from classical It\^o calculus. Section~\ref{sect: 2} provides a thorough introduction to the simplified stochastic calculus. It also explains how the proposed approach facilitates computation of drifts and expected values; in particular, it tackles the introductory example in the presence of jumps. Section~\ref{sect:190714.1} demonstrates the strength of the proposed approach on three additional examples. Section~\ref{sect:190620} amplifies this point by showcasing calculations that also require a change of measure. In particular, Example~\ref{E:181124.1} contains a new result that makes use of a non-equivalent change of measure. Section~\ref{sect: predictable} highlights the robustness of the proposed approach whereby, for a given task, the same representation applies in both discrete and continuous models. Such unification is unattainable in standard calculus. 

The examples in the paper are inspired by applications in Economics and Financial Mathematics but the broader lessons are clearly applicable to Science at large. We explore wider repercussions of the proposed methodology and briefly mention applications to Statistics and Engineering in the concluding Section~\ref{sect: conclusions}.
\subsection{McKean calculus for It\^o processes}\label{sect: Brownian}
For reasons of tractability, there is a preponderance of continuous-time stochastic models based on Brownian motion. Traditional stochastic calculus reflects this historical bias. As an example, consider a stochastic model for two economic variables, capital $K$ and labour $L$,
\begin{align}
\frac{\d K_t}{K_t} &= \mu_K \d t + \sigma_K \d W_t, \label{eq: K}\\
\frac{\d L_t}{L_t} &= \mu_L \d t + \sigma_L  \left(\rho_{KL}\d W_t+\sqrt{1-\rho^2_{KL}}\d \widehat{W}_t\right). \label{eq: L}
\end{align}
Here $W$ and $\widehat{W}$ are two independent Brownian motions. The inputs in this model are $\mu_K$, $\mu_L$, $\sigma_K>0$, $\sigma_L>0$, and $\rho_{KL}\in [-1,1]$ describing the correlation between the changes in $K$ and $L$. Informally, the `drift part' $\mu_K\d t$ represents expected change, while the `noise' $\sigma_K \d W_t$ is loosely interpreted as a shock with mean zero and variance $\sigma_K^2\d t$. 

The symbol $\d K_t$ represents an increase in capital over an infinitesimal time period $\d t$. The left-hand side of \eqref{eq: K} signifies percentage change in capital over the same period. The percentage change \emph{per unit of time} is not well-defined because the derivative $\sfrac{\d W_t}{\d t}$ does not exist. However, the \emph{expected} percentage change per unit of time is finite and equal to $\mu_K$.  This means the expected proportional increase in capital over a fixed time horizon $T$ equals 
\begin{equation}\label{eq: E[K_T]}
\E\left[\frac{K_T}{K_0}\right] = \e^{\mu_K T}.
\end{equation}
In the terminology of \citet{samuelson.65}, $K$ and $L$ are \emph{geometric Brownian motions} with drift rates $\mu_K$ and $\mu_L$, respectively.

Suppose we are interested in the evolution of the capital-labour ratio,  $\sfrac{K}{L}$. The standard It\^o calculus (\citealp[Theorem 6]{ito.51}; \citealp[Theorem~3.6]{karatzas.shreve.91}; \citealp[p.~95]{duffie.01}; \citealp[Theorem~4.16]{bjork.09}) yields, after simplifications, 
\begin{equation}\label{eq: ItoCondensed}
\begin{split}
\d \left(\frac{K_t}{L_t}\right) = & \frac{K_t}{L_t}(\mu_K -\mu_L -\rho_{KL}\sigma_K \sigma_L+ \sigma^2_L) \d t\\														
								&\quad+\frac{K_t}{L_t}(\sigma_K - \rho_{KL}\sigma_L )\d W_t - \frac{K_t}{L_t}\sigma_L \sqrt{1-\rho^2_{KL}}\d \widehat{W}_t.
\end{split}
\end{equation}

One can make two observations at this point:
\begin{enumerate}[label={\rm(\arabic{*})}, ref={\rm(\arabic{*})}] 
	\item\label{obs1} The formula \eqref{eq: ItoCondensed} is not easy to decipher --- the calculations are involved. 
	\item\label{obs2} The formula is `misleading' --- the processes $K$ and $L$ are already given, therefore the object on the left-hand side is defined path by path as the ratio $\sfrac{K_T(\omega)}{L_T(\omega)}$ and cannot depend on the reference probability measure. In contrast, some of the objects on the right-hand side are strongly measure-dependent: certainly, if we change the reference probability measure, there is no guarantee that $W$ and $\widehat{W}$ will still be Brownian motions under the new measure.
\end{enumerate}

\citet{mckean.69} addresses \ref{obs1} and informally also \ref{obs2} by rewriting \eqref{eq: ItoCondensed} in the form
\begin{align}
\d \left(\frac{K_t}{L_t}\right) = \frac{K_t}{L_t}\left(\frac{\d K_t}{K_t} - \frac{\d L_t}{L_t} -\frac{\d K_t}{K_t} \frac{\d L_t}{L_t}	+\left(\frac{\d L_t}{L_t}\right)^2\right) , \label{eq: shortIto}						 
\end{align}
where $\d K_t\d L_t$ is understood to stand for $\d [K,L]_t$ and $[K,L]$ is the quadratic covariation of the processes $K$ and $L$. In the present case we have 
$$\frac{\d [K,L]_t}{K_tL_t}=\rho_{KL}\sigma_K\sigma_L\d t \qquad \text{and}\qquad \frac{\d[L,L]_t}{L^2_t}= \sigma^2_L \d t.$$ 
Let us make two further observations:
\begin{enumerate}[resume, label={\rm(\arabic{*})}, ref={\rm(\arabic{*})}] 
	\item\label{obs3} Formula \eqref{eq: shortIto} is not only measure-independent, it is also model-free in the sense that it holds for \emph{any} two 
	continuous semimartingales $K$ and $L$ such that the integrals of $\sfrac{\d K_t}{K_t}$ and $\sfrac{\d L_t}{L_t}$ are well-defined.
	\item\label{obs4} \citet[p.~33]{mckean.69} observes that one can obtain \eqref{eq: shortIto} much more directly without passing through 
	\eqref{eq: K}--\eqref{eq: L}  
	and \eqref{eq: ItoCondensed}, simply by writing down a second-order Taylor expansion for $f(K,L)=\sfrac{K}{L}$ in the form
	\begin{equation}\label{eq:201020}
	\begin{split} 
		\d f(K_t,L_t) &= \tfrac{\partial f}{\partial k}(K_t,L_t)\d K_t +\tfrac{\partial f}{\partial \ell}(K_t,L_t)\d L_t \\
		&\ + \frac{1}{2}\left(\tfrac{\partial^2 f}{\partial k^2}(K_t,L_t)(\d K_t)^2+2\tfrac{\partial^2 f}{\partial k\partial \ell}(K_t,L_t)\d K_t\d L_t+\tfrac{\partial^2 f}{\partial \ell^2}(K_t,L_t)(\d L_t)^2\right).
	\end{split}
	\end{equation}
\end{enumerate}

Suppose now we wish to evaluate the \emph{expected} value of the capital--labour ratio. Here the formula \eqref{eq: ItoCondensed} is very helpful because it tells us immediately that $\sfrac{K}{L}$ is a geometric Brownian motion with the drift rate
\begin{equation}
b=\mu_K-\mu_L-\rho_{KL}\sigma_K\sigma_L+\sigma^2_L. \label{eq: drift Log(K/L)}
\end{equation}
With this coefficient in hand one swiftly concludes, in analogy to \eqref{eq: E[K_T]}, 
\begin{equation}\label{eq: E[K_T/L_T]}
\E\left[\frac{K_T}{L_T}\right]=\frac{K_0}{L_0} \e^{bT}.
\end{equation} 

The need for equation \eqref{eq: ItoCondensed} is only illusory, however. One can obtain formula \eqref{eq: drift Log(K/L)} equally easily from the measure-independent formula \eqref{eq: shortIto} by inserting the expected rate of change of $K, L$ and their quadratic (co)variations on the right-hand side of \eqref{eq: shortIto} as implied by \eqref{eq: K}--\eqref{eq: L}, 
\begin{alignat}{10}
 &\ccol{\dfrac{\d (\sfrac{K_t}{L_t})}{\sfrac{K_{t}}{L_{t}}}}  & \,\, = \,\, &  \ccol{\dfrac{\d K_t}{K_{t}}}  & \,\, -\,\,  &  \ccol{\dfrac{\d L_t}{L_{t}}}  & \,\, -\,\,  &  \ccol{\dfrac{\d[K,L]_t}{K_{t}L_t}}  & \,\, +\,\, & \ccol{\dfrac{\d [L,L]_t}{L_{t}^2}}, \label{eq: d(K/L) rigorous1}\\
 & \ccol{\downarrow}  &     &  \ccol{\downarrow}  &   &  \ccol{\downarrow}  & &  \ccol{\downarrow}  & & \ccol{\downarrow} &\nonumber\\[-0.1cm]
 & \ccol{b}   &  \,\, = \,\, &  \ccol{\mu_K}                        & \,\, -\,\, &  \ccol{\mu_L} & \,\, -\,\, & \ccol{\rho_{KL}\sigma_K\sigma_L}  & \,\, +\,\, & \ccol{\sigma^2_L.}  	\label{eq: drift}		 
\end{alignat}
Note that apart from the initial values $K_0$, $L_0$ and the time horizon $T$, the calculation requires \emph{five} characteristics of the capital and labour processes:  $\mu_K$, $\mu_L$, $\sigma_K$, $\sigma_L$, and $\rho_{KL}$.

\setstretch{1.2}
\subsection{First steps}\label{sect: first steps}
Let us now modify 
\eqref{eq: K}--\eqref{eq: L} 
by adding two jump components $J^K$ and $J^L$ that jointly form a L\'evy process, 
\begin{align}
\frac{\d K_t}{K_{t-}} &= \mu_K \d t + \sigma_K \d W_t + \d J^K_t; \label{eq: K jump}\smallskip\\
\frac{\d L_t}{L_{t-}} &= \mu_L \d t + \sigma_L  \left(\rho_{KL}\d W_t+\sqrt{1-\rho^2_{KL}}\d \widehat{W}_t\right) + \d J^L_t; \label{eq: L jump}\\
\d J^K_t &= \int_{|x_1|\leq 1}x_1(N(\d t,\d (x_1, x_2) )-\Pi(\d (x_1, x_2))\d t) + \int_{|x_1|> 1}x_1N(\d t,\d (x_1, x_2));
\label{eq:191023.1}\\
\d J^L_t &= \int_{|x_2|\leq 1}x_2(N(\d t,\d (x_1, x_2))-\Pi(\d (x_1, x_2))\d t) + \int_{|x_2|> 1}x_2N(\d t,\d (x_1, x_2)).
\label{eq:191023.2}
\end{align}
Here $N$ is a Poisson random measure and $\Pi$ the corresponding L\'evy measure. In particular, when $(J^K,J^L)$ form a compound Poisson process, i.e., when $\Pi(\R^2)<\infty$, the quantity $\Pi(\R^2)$ is the arrival intensity of jumps and 
$$x = (x_1, x_2) \mapsto \frac{\Pi((-\infty,x_1)\times (-\infty,x_2))}{\Pi(\R^2)} $$ 
is the bivariate cumulative distribution of jump sizes. The complicated expression in \eqref{eq:191023.2} is required to accommodate the models where the small jumps of $L$ have infinite variation while, at the same time, the large jumps of $L$ have infinite mean.

In the presence of jumps, by convention, the paths of $K$ and $L$ are assumed to be right-continuous with left limits.
That means the value of capital \emph{before} a jump is given by the left limit $K_{t-}$ while $K_t$ is the value \emph{after} a jump. The jump is naturally defined as the difference between the two, $\Delta K_t = K_t- K_{t-}$ and likewise for $L$; see Figure~\ref{fig: RCLL}.
\begin{figure}
	\centering
		\includegraphics[width=0.5\textwidth]{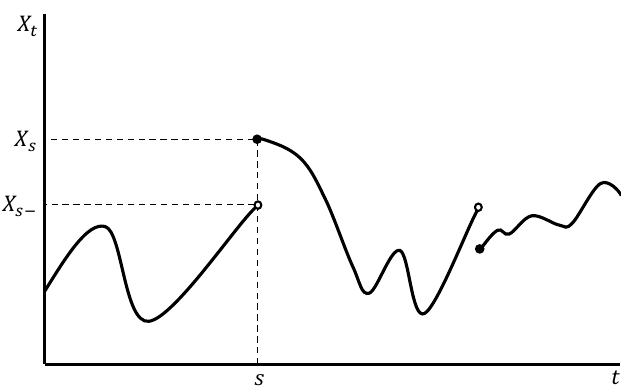}
	\caption{Illustration of a right-continuous path with left limits (process $X$).}
	\label{fig: RCLL}
\end{figure} 

Very little has changed between \eqref{eq: K}--\eqref{eq: L} and \eqref{eq: K jump}--\eqref{eq: L jump}; we have replaced one process with time-homogeneous independent increments by another. But unlike in the Brownian case, it is now mathematically possible that the right-hand side of \eqref{eq: L jump} --- the percentage change in labour supply $L$ --- does not have a finite first moment, while the mean of $\sfrac{K}{L}$ is nevertheless finite.\footnote{\label{fn:1}From a modelling point of view it is not realistic to believe that $L$ has infinite mean. We are simply stating that the calculus must be general enough to entertain such possibility. There are other circumstances where infinite mean arises more naturally, for example stock price $S$ will typically have finite mean but $\log S$ may plausibly have mean equal to $-\infty$ in a model with jumps if returns close to -100\% occur frequently enough.}

The main takeaway message is that decomposing a stochastic integral into `signal' and `noise' as suggested by 
\eqref{eq: K}--\eqref{eq: L} is, in general, not straightforward. One possibility is to split $J^L$ into two components containing the small and large jumps, respectively, and decompose only the small jump component into signal and noise as shown in \eqref{eq:191023.2}. This makes \eqref{eq: L jump} look more like \eqref{eq: L} and largely represents the current practice in applications where jumps of the most general type are considered, see \citet{kallsen.00}, \citet{fujiwara.miyahara.03},  \citet{hubalek.al.06}, \citet{jeanblanc.al.07,oksendahl.sulem.07}, \citet{bender.niethammer.08}, and \citet{applebaum.09}.

In this paper, we do handle arbitrary jumps but we interpret the difficulty with signal--noise decomposition very differently, which is to say we refrain from using such a decomposition altogether and instead look for a  measure-invariant representation of $\sfrac{K}{L}$. The sought expression must reduce to the McKean calculus \eqref{eq:201020} when $K$ and $L$ are continuous. At the same time, it must correctly account for changes due to jumps in $K$ and $L$. Just such a formula, suitably reinterpreted, can be traced to \citet[Section~3]{emery.78}. Let us now describe what we mean, first informally and then rigorously.

In the present example we seek the representation of $\sfrac{K}{L}$. This will be written symbolically as
\begin{equation}\label{eq:190924.0}
\d \left(\frac{K_t}{L_t}\right) = \frac{K_{t-}+\d K_t}{L_{t-}+\d L_t} - \frac{K_{t-}}{L_{t-}},
\end{equation}
which informally leads to an expression for percentage changes
\begin{equation}\label{eq:190924.1}
\frac{\d \left(\sfrac{K_t}{L_t}\right)}{\sfrac{K_{t-}}{L_{t-}}} 
=\frac{1+\sfrac{\d K_t}{K_{t-}}}{1+\sfrac{\d L_t}{L_{t-}}}-1.
\end{equation}
For \'Emery,  the right-hand side of \eqref{eq:190924.1} represents a deterministic, time-constant function
\begin{equation}\label{eq:190924.3}
\eta(x_1, x_2)=\frac{1+x_1}{1+x_2} - 1
\end{equation}
that acts on the increments $\sfrac{\d K_t}{K_{t-}}$ and $\sfrac{\d L_t}{L_{t-}}$.
The real meaning of the expression $\xi(\d X_t)$ for a  generic deterministic time-constant $\mathcal{C}^2$ function $\xi$ with $\xi(0)=0$ and any semimartingale $X$ is supplied by the \'Emery formula%
\footnote{The symbols $D\xi$ and $D^2\xi$ stand for first and second partial derivatives of $\xi$. \citet{emery.78} considers only the case when $\xi$ is deterministic and constant in time. We explicitly allow $\xi$ to be predictable in order to develop a calculus that includes stochastic integration and It\^o's formula.} 
\begin{equation}\label{eq:190924.5}
\xi_t(\d X_t) = D \xi_t (0) \d X_t+ \frac{1}{2} \sum_{i,j = 1}^d D^2_{ij} \xi_t(0)\, \d\bigs[X^{(i)},X^{(j)}\bigs]^c_t 
+ \left(\xi_t(\Delta X_t) -  D \xi_t(0) \Delta X_t  \right),
\end{equation}
where the last term yields absolutely summable jumps of finite variation and $[\cdot,\cdot]^c$ stands for the continuous part of quadratic covariation. An easy way to memorize the \'Emery formula is suggested in equation \eqref{eq:191202.3} and the subsequent paragraph. 

Having assigned meaning to the right-hand side of \eqref{eq:190924.1} via the formula \eqref{eq:190924.5} with $\xi=\eta$ from \eqref{eq:190924.3}, equality \eqref{eq:190924.1} is no longer an informal expression but a theorem whose validity one needs to establish. To accomplish this goal and build the simplified calculus, one can begin with the observation that \eqref{eq:190924.0}, too, represents a function that acts on the increments of the underlying processes; in this case it acts on $\d K_t$ and $\d L_t$. One important difference is that the function in question is no longer deterministic, i.e., we have  
$$\d \left(\frac{K_t}{L_t}\right) = \tilde\eta_t(\d K_t,\d L_t) $$
with 
\begin{equation}\label{eq:190925.1}
\tilde\eta_t(x_1, x_2)=\frac{K_{t-}+x_1}{L_{t-}+x_2} - \frac{K_{t-}}{L_{t-}}.
\end{equation}
Another practical consideration is that the predictable function $\tilde\eta$ in \eqref{eq:190925.1} is not finite-valued at the point where $x_2 = - L_-$.
 
To accommodate functions with restricted domain, we say that $$ \text{a predictable function $\xi$ is \emph{compatible} with  $X$ if } \xi(\Delta X) \text{ is finite-valued}, \P\text{--a.s.}$$
It is formally straightforward to apply the formula \eqref{eq:190924.5} to a predictable function $\xi$.  
However, it is \emph{a priori} not clear that the formula will be well-defined;  in particular, for $\xi=\tilde \eta$ in \eqref{eq:190925.1} it is not clear that one obtains
\begin{equation}\label{eq:191003.1} 
\sum_{0<t \leq \cdot} |\xi_t(\Delta X_t) - D \xi_t(0) \Delta X_t|< \infty.
\end{equation}  
Ideally, we would like to have a calculus that gives rise only to those predictable functions $\xi$ for which the integral $\int_0^\cdot\xi_t (\d X_t)$ is always well-defined so we do not have to manually check admissibility every time a new $\xi$ arises. The precise formulation of \emph{two} such subclasses, whose elements will be called \emph{universal representing functions}, is given in Appendix~\ref{A:proofs}. We denote by $\I_{0\R}^{d,n}$ the subset of universal representing functions that map $\R^d$--valued processes to $\R^n$--dimensional processes.  We also set $\I_{0\R}=\bigcup_{d,n\in\N}\I_{0\R}^{d,n}$. For computations involving characteristic functions, it is convenient to generalize the \'Emery formula~\eqref{eq:190924.5} to a subset of complex predictable functions by interpreting $D\xi$ and $D^2\xi$ as complex derivatives. The subset of all universal representing  functions that are complex-differentiable near the origin will be denoted by $\I_{0\Cx}$.

Propositions~\ref{P:integral}--\ref{P:composition} in Appendix~\ref{A:proofs} show that the class $\I_{0\R}$ is self-contained; if we use standard real-valued operations we are guaranteed to stay within $\I_{0\R}$. The class $\I_{0\Cx}$ is also self-contained provided the transformations we apply are complex-differentiable, as will be the case in all our examples involving complex numbers.
These results guarantee that we will only ever encounter functions in $\I_{0\R}$ (resp., $\I_{0\Cx}$) and so will never have to check the conditions of Definition~\ref{D:core} manually.  

\subsection{Integral notation}
Just as the McKean calculus of Subsection~\ref{sect: Brownian}, the simplified stochastic calculus is most intuitive when expressed in differential form, such as \eqref{eq:190924.0} and \eqref{eq:190924.1}. When one wishes to speak of the integrated process whose increments are equal to $\xi_t(\d X_t)$, one typically just introduces a new label, say $Y$, writing $ \d Y_t = \xi_t(\d X_t)$. There is nothing wrong with the relabelling approach; it does deliver all the immediate benefits of the simplified calculus and helps to keep technicalities to a minimum.

Side-by-side with the intuitive differential approach, we want to offer the reader an alternative `high-level' view of the calculus where the roles of $\xi$ and $X$ are acknowledged explicitly. Accordingly, the process with increments $\xi_t(\d X_t)$, starting at 0, will be denoted by $\xi \circ X$, i.e.,
\[
 \xi \circ X = \int_0^\cdot \xi_t(\d X_t).
\]
 The high-level notation may seem a little abstract at first, but it offers distinct benefits such as compactness and flexibility. For example, in the integral notation one can write
\begin{align*}
[X,X] =& \int_0^\cdot (\d X_t)^2 = \id^2 \circ X;\\
[[X,X],X] =& \int_0^\cdot(\d X_t)^3 = \id^3\circ X;\\
[[[X,X],X],X] = [[X,X],[X,X]] =& \int_0^\cdot(\d X_t)^4 = \id^4\circ X.
\end{align*}
Here we let $\id$ denote the identity function; 
$$\id(x) = x.$$ 
Below we use $\id_1(x)=x_1$ for the first component, $\id_2$ for the second, etc., where required.

The notation $\xi\circ X$ also emphasizes the universality of the transformation $X\mapsto \xi\circ X$. In the same way that $[X,X]$ is well-defined for \emph{any} semimartingale $X$, the process $|\id|^\alpha\circ X$ is well-defined for any semimartingale $X$ and any real $\alpha\geq 2$. This brings us to other universal transformations that are commonly used in the literature. For example, provided that $X$ and $X_-$ are different from zero, the literature defines $\Log(X)$ as the process of cumulative percentage change in $X$, i.e.,
$$\d \Log(X)_t = \frac{\d X_t}{X_{t-}}.$$
Thus, in the integral notation formula \eqref{eq:190924.1} reads  
\begin{equation} \label{eq: Log(K/L) circ}
\Log\left(\frac{K}{L}\right) = \left(\frac{1+\id_1}{1+\id_2}-1\right)\circ (\Log(K),\Log(L)).
\end{equation}
Provided that the cumulative percentage changes in $K$ and $L$ are well-defined, formula \eqref{eq: Log(K/L) circ} holds for arbitrary semimartingales $K$ and $L$; it is new in this generality as far as we know.
\section{Simplified stochastic calculus}\label{sect: 2}
\subsection{Composite rules}
The simplified stochastic calculus rests on a sequential application of Propositions~\ref{P:integral}--\ref{P:composition}. In practice, one would not use these propositions directly but instead combine them into composite rules that transform a represented process $Y=Y_0+\xi\circ X$ into another represented process $Z$. In differential notation, we have $\d Y_t = \xi_t (\d X_t)$ and the first two rules read as follows.
\begin{description}[leftmargin=0cm]
\item[Stochastic integration with respect to $Y$] For a locally bounded process $\zeta$  and the integral $Z = \int_0^\cdot \zeta_t \d Y_t$ one has 
$$ \d Z_t = \zeta_t \xi_t (\d X_t).$$
\item[It\^o formula for $Z=f(Y)$] For a suitably smooth function $f$ such that $Y$ and $Y_-$ lie in the interior of the domain of $f$ one has 
$$ \d Z_t = \d f(Y_t) = f(Y_{t-}+\d Y_t)-f(Y_{t-}) = f(Y_{t-}+\xi_t(\d X_t))-f(Y_{t-}).$$
\end{description}
We now restate the above fully rigorously in integral notation.
\begin{corollary}\label{C:191018}
Assume $\xi\in\I_{0\R}^{d,n}$ (resp., $\I_{0\Cx}^{d,n}$) is compatible with $X$ and consider the $n$--dimensional  
process 
$$Y = Y_0 + \xi\circ X.$$
The following rules then apply.
\begin{itemize}
\item \emph{Stochastic integration:} For a locally bounded $\R^{m\times n}$--valued (resp., $\Cx^{m\times n}$--valued) predictable process $\zeta$ we have $\zeta \xi\in\I^{d,m}_{0\R}$ (resp., $\I^{d,m}_{0\Cx}$) and 
\begin{equation}\label{eq: integration rule}
Z = Z_0 + \int_0^\cdot \zeta_u \d Y_u = Z_0 +\zeta \xi\circ X;
\end{equation}
\item \emph{Smooth transformation (`It\^o's formula'):} Assume $Y$ and $Y_-$ remain in an open subset $\mathcal{U}$ of $\R^n$ (resp., $\Cx^n$) where the  function $f:\mathcal{U}\to\R^m$ is twice continuously differentiable (resp. where $f:\mathcal{U}\to\Cx^m$ is analytic). Then $f(Y_{-}+\xi)-f(Y_{-})$ is in $\I_{0\R}^{d,m}$ (resp., $\I_{0\Cx}^{d,m}$) and it is compatible with $X$. Furthermore,
\begin{equation}\label{eq: Ito rule} 
Z = f(Y) = f(Y_0) +(f(Y_{-}+\xi)-f(Y_{-}))\circ X.
\end{equation}
\end{itemize}
\end{corollary}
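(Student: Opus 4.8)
The plan is to read off both rules as specializations of Propositions~\ref{P:integral}--\ref{P:composition}, with no new semimartingale estimates required. The first rule rests on two facts established there: that $\I_{0\R}$ (resp.\ $\I_{0\Cx}$) is closed under left-multiplication by a locally bounded predictable process, and that the $\circ$-operation is associative in the sense that $\int_0^\cdot \zeta_u\,\d(\xi\circ X)_u=(\zeta\xi)\circ X$. Concretely, I would first observe that the predictable function $(t,x)\mapsto\zeta_t\,\xi_t(x)$ satisfies $(\zeta\xi)_t(0)=0$, $D(\zeta\xi)_t(0)=\zeta_tD\xi_t(0)$, $D^2(\zeta\xi)_t(0)=\zeta_tD^2\xi_t(0)$, and $(\zeta\xi)_t(\Delta X_t)-D(\zeta\xi)_t(0)\Delta X_t=\zeta_t\bigl(\xi_t(\Delta X_t)-D\xi_t(0)\Delta X_t\bigr)$; stopping at the times where $|\zeta|$ exceeds a constant reduces the summability condition~\eqref{eq:191003.1} for $\zeta\xi$ to the one already available for $\xi$, giving $\zeta\xi\in\I^{d,m}_{0\R}$ (resp.\ $\I^{d,m}_{0\Cx}$). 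Substituting these identities into the \'Emery formula~\eqref{eq:190924.5} then shows termwise that $(\zeta\xi)_t(\d X_t)=\zeta_t\,\xi_t(\d X_t)=\zeta_t\,\d Y_t$, which is exactly~\eqref{eq: integration rule}.

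For the smooth-transformation rule, I would introduce the predictable function
\[
\eta_t(y)=f(Y_{t-}+y)-f(Y_{t-}),
\]
whose domain $\{y:Y_{t-}+y\in\mathcal U\}$ is open and contains the origin since $Y_{t-}\in\mathcal U$, and which satisfies $\eta_t(0)=0$. Because $f$ is twice continuously differentiable (resp.\ analytic) on $\mathcal U$ and $Y,Y_-$ are locally bounded processes confined to $\mathcal U$, the It\^o-type statement among Propositions~\ref{P:integral}--\ref{P:composition} applies to $Y$ and yields $\eta\in\I^{n,m}_{0\R}$ (resp.\ $\I^{n,m}_{0\Cx}$), compatibility of $\eta$ with $Y$, and $f(Y)=f(Y_0)+\eta\circ Y$. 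I would then invoke the composition proposition for the pair $(\eta,\xi)$: since $\xi\in\I^{d,n}_{0\R}$ (resp.\ $\I^{d,n}_{0\Cx}$) is compatible with $X$ and $\eta$ is compatible with $Y=\xi\circ X$, the composite predictable function $\chi_t(x):=\eta_t(\xi_t(x))=f(Y_{t-}+\xi_t(x))-f(Y_{t-})$ --- which is precisely $f(Y_-+\xi)-f(Y_-)$ --- lies in $\I^{d,m}_{0\R}$ (resp.\ $\I^{d,m}_{0\Cx}$), is compatible with $X$, and satisfies $\chi\circ X=\eta\circ(\xi\circ X)=\eta\circ Y$. Combined with $f(Y)=f(Y_0)+\eta\circ Y$ this gives~\eqref{eq: Ito rule}. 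Here compatibility is in fact transparent: $\chi_t(\Delta X_t)=f(Y_{t-}+\Delta Y_t)-f(Y_{t-})=f(Y_t)-f(Y_{t-})$, finite because $Y$ stays in $\mathcal U$.

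The one step that requires genuine care, and where I expect the real work to sit, is verifying that $\eta$ --- and hence the composite $\chi$ --- meets \emph{all} the requirements in Definition~\ref{D:core} of a universal representing function, not merely that it is a smooth function vanishing at $0$; this is precisely where the hypothesis that $Y$ and $Y_-$ remain in the \emph{open} set $\mathcal U$ is used. The natural device is localization: stopping $Y$ before it reaches $\partial\mathcal U$ and using local boundedness of $Y_-$, so that on each stochastic interval $\eta$ coincides with a $C^2$ function on a fixed relatively compact neighborhood with uniformly bounded first and second derivatives. The absolute summability $\sum_{0<t\le\cdot}\bigl|\chi_t(\Delta X_t)-D\chi_t(0)\Delta X_t\bigr|<\infty$ then follows by splitting jumps into the finitely many large ones and the small ones, controlling the latter by $|f(Y_t)-f(Y_{t-})-Df(Y_{t-})\Delta Y_t|\le\tfrac12\sup|D^2f|\,|\Delta Y_t|^2$ together with $\sum_{0<t\le\cdot}|\Delta Y_t|^2<\infty$, a standard property of the semimartingale $Y=Y_0+\xi\circ X$. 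All remaining verifications are routine bookkeeping with the \'Emery formula.
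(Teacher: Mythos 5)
Your proposal is correct and follows essentially the route the paper intends: the paper gives no separate proof of Corollary~\ref{C:191018}, presenting it as a direct combination of Propositions~\ref{P:integral}--\ref{P:composition} (applied first to $Y$, then composed with $\xi$ via the chain rule of Corollary~\ref{C:composition}), which is exactly what you do. Your additional termwise verification of the integration rule via the \'Emery formula and the explicit localization argument for $\eta$ are consistent with, and slightly more detailed than, the paper's implicit argument.
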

Let us outline a general scheme of how the two composite rules are applied in practice. At the outset, one will designate the primitive input to the problem at hand; this input process is thereafter labeled $X$. For example, in Subsections~\ref{sect: Brownian} and \ref{sect: first steps} it is natural to start from the bivariate process 
\begin{equation}\label{eq:191018.1}
X=(\Log(K),\Log(L)).
\end{equation} 
Observe that $X$ is always representable with respect to itself thanks to Proposition~\ref{P:integral} with $\zeta$ equal to  the $d\times d$ identity matrix. Starting off from the trivial representation $X =X_0 + \id\circ X$, i.e., taking $Y=X$ and $\xi(x) = x$ in Corollary~\ref{C:191018}, one  applies smooth transformation or stochastic integration as required to obtain the first intermediate result $Z$. This intermediate result (relabeled $Y$) becomes the input to the next application of Corollary~\ref{C:191018} producing the next intermediate output $Z$. The $Y\to Z$ pattern is repeated until one reaches the desired output $Z$; in our example the goal is
\begin{equation}\label{eq:191018.2}
Z = \Log\left(\frac{K}{L}\right).
\end{equation}
Table~\ref{tab:1} illustrates the steps required in the transition from \eqref{eq:191018.1} to \eqref{eq:191018.2}.

\begin{table}[tbp] \centering%
\begin{tabular}{ccccc}
\hline\vspace{-0.25cm}\\
\# & $\d Y$ & \multicolumn{1}{c}{operation $Y\to Z$}& &\multicolumn{1}{c}{$\d Z$} \smallskip\\ \hline \\[-0.2cm]
1 
& 
$\d\!\left[ \!\!\!
\begin{array}{c}
\Log(K)  \\[-1pt]
\Log(L)%
\end{array}%
\!\!\!\right] $ 
& 
\begin{tabular}{c}
integration \\ 
$\zeta =\left[\!\!
\begin{array}{cc}
K_{-} & 0 \\[-1pt] 
0 & L_{-}%
\end{array}
\!\!\right] $ 
\end{tabular}
&
& 
$\d\!\left[\!\!
\begin{array}{c}
K \\[-1pt] 
L%
\end{array}%
\!\!\right] =\left[ \!\!
\begin{array}{cc}
K_{-} & 0 \\[-1pt] 
0 & L_{-}%
\end{array}%
\!\!\right] 
\d\!\left[ \!\!
\begin{array}{c}
\Log(K)  \\[-1pt] 
\Log(L)%
\end{array}%
\!\!\right] $ \smallskip \\ \vspace{-0.2cm}\\ 
2 
& 
$\d\!\left[\!\!
\begin{array}{c}
 K \\[-1pt] 
 L%
\end{array}%
\!\!\right] $ 
&
\begin{tabular}{c}
     smooth transformation \\ 
	    $f(K,L)=\frac{K}{L}$ 
\end{tabular} 
&
& 
\begin{tabular}{rl}
$\d\!\left( \frac{K}{L}\right)$\!\!\! &$=\left( \frac{K_{-}+\d K}{L_{-}+\d L}-\frac{K_{-}}{L_{-}}\right)$\smallskip\\
 &$=\frac{K_{-}}{L_{-}}\left( \frac{1+\d\mathcal{L}\left( K\right) }{1+\d\mathcal{L}\left( L\right) }-1\right) $
\end{tabular} \smallskip \\ \\[-0.2cm]
3 
& 
$\d\!\left(\dfrac{K}{L}\right)$
& 
\begin{tabular}{c}
integration\\ 
$\zeta =\frac{L_{-}}{K_{-}}$
\end{tabular}
&
& 
\begin{tabular}{rl}
$\d\Log\!\left( \frac{K}{L}\right)$\!\!\! &$=\frac{L_{-}}{K_{-}}\,\d\!\left( \frac{K}{L}\right) $\smallskip\\
&$= \frac{1+\d\Log(K)}{1+\d\Log(L)}-1 $
\end{tabular} \bigskip\\ \hline
\end{tabular}%
\bigskip
\caption{Schematic derivation of \eqref{eq: Log(K/L) circ} by means of Corollary~\ref{C:191018}}\label{tab:1}
\end{table} 

The discussion above concerns formal calculations where the rules of the simplified calculus are applied mechanically. Many users will prefer a more intuitive approach whose main idea is apparent in the last column of Table~\ref{tab:1}. Here one observes that the calculus traces the behaviour of jumps and so one may effectively pretend that $X$, $Y$, and $Z$ are finite-variation pure-jump processes. In this way, it is possible to arrive at the correct $\xi$ even without applying formal rules. In the context of \eqref{eq: Log(K/L) circ}, for example, suppose $K$ increases by 50\% and $L$ increases by 20\%. The percentage change in $\sfrac{K}{L}$ is then precisely
$$\frac{1+0.5}{1+0.2}-1 = 25\%. $$
Therefore, formula \eqref{eq: Log(K/L) circ}, among other things, describes jump transformations: every time $\Log(K)$ jumps by $x_1$ (e.g., 0.5) and $\Log(L)$ jumps by $x_2$ (e.g., 0.2) the process $\Log(\sfrac{K}{L})$ jumps by
$$ \xi(x) = \frac{1+x_1}{1+x_2}-1.$$

As a further example, let us see how the rules of the simplified calculus can be used to obtain the representation of the logarithmic return in terms of the rate of return.
\begin{example}[Representation of the log return in terms of the rate of return]\label{E:181127.1}
Let $S>0$ stand for the value of an investment with $S_->0$ and let $X = \Log(S)=\int_0^\cdot \sfrac{\d S_t}{S_{t-}}$ be the cumulative rate of return on this investment. On a purely intuitive level, thinking only of jump transformations, one can write 
\begin{align*}
\d \log S_{t}  = \log \left(S_{t-}+\d S_t\right)-\log S_{t-} = \log \left(1+ \frac{\d S_t}{S_{t-}}\right) =\log \left(1+ \d \Log(S)_t\right).
\end{align*}
More formally, the integration rule yields $S = S_0 + S_-\,\id\circ \Log(S)$ and smooth transformation then gives
\begin{equation*}
\log S = \log(S_0 + S_-\,\id\circ \Log(S)) = \log S_0 + (\log (S_-+S_-\,\id)-\log S_-)\circ \Log(S).
\end{equation*}
Both approaches yield the representation 
\begin{equation*}
\log S = \log S_0 + \log(1+\id)\circ \Log(S)
\end{equation*}
for any semimartingale $S$ such that $S_->0$ and $S>0$.\qed
\end{example}
As the final introductory example consider the representation of quadratic covariation.
\begin{example} [Representation of quadratic covariation]\label{E:180809}
The quadratic covariation $[X,Y]$ satisfies (or, as in \citealp{meyer.76}, is defined by) the identity 
\begin{equation*}
X Y=X_0Y_0+\int_0^\cdot X_{t-}\d Y_t + \int_0^\cdot Y_{t-}\d X_t + [X,Y].
\end{equation*}
This yields
\begin{alignat*}{3}
\d [X,Y]_t =& \qquad\qquad\qquad\d (X_t Y_t)    &   -{}&X_{t-}\d Y_t - Y_{t-}\d X_t&&\\
=&\, (X_{t-}+\d X_t)(Y_{t-}+\d Y_t)-X_{t-}Y_{t-}& \ -{}&X_{t-}\d Y_t - Y_{t-}\d X_t&\ ={}& \d X_t \d Y_t.
\end{alignat*}
More formally, the integration  and smooth transformation rules yield
\[ 
[X,Y]= ((X_-+\id_1)(Y_-+\id_2)-X_-Y_--X_-\,\id_2-Y_-\,\id_1)\circ (X,Y) = \id_1\,\id_2\circ(X,Y).
\]
Thus, in the differential notation one can rigorously write $\d [X,X]_t = (\d X_t)^2$ for any univariate semimartingale $X$. \qed
\end{example}
Section~\ref{sect:190714.1} contains many more explicit representations that are useful in practice. Some of these are well known in the specialist literature, while others are new. Proposition~\ref{P:composition}, in particular,  is a powerful tool for obtaining new representations from old ones. We summarize it here in the form of a composition rule. Observe that in the differential notation the rule is completely natural; it asserts that 
\begin{equation}\label{eq:191219.1}
\text{$\d Y_t = \xi_t(\d X_t)$\quad and \quad$\d Z_t = \psi_t(\d Y_t)$ \qquad \text{yield} \qquad$\d Z_t = \psi_t(\xi_t(\d X_t))$.}
\end{equation}

\begin{corollary}[Composition of representations]\label{C:composition}
Assume $\xi\in\I_{0\R}^{d,n}$ is compatible with \smallskip $X$ and consider the $n$--dimensional process 
$Y = Y_0 + \xi\circ X$.
For $\psi\in\I_{0\R}^{n,m}$ compatible with $Y$ one obtains that $\psi(\xi)\in \I_{0\R}^{d,m}$ is compatible with $X$ and 
\begin{equation}\label{eq:composition}
Z = Z_0 + \psi\circ Y = Z_0 + \psi(\xi)\circ X.
\end{equation}
An analogous statement holds with $\I_{0\Cx}$ in place of $\I_{0\R}$.
\end{corollary}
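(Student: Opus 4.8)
At the level of the paper's architecture this corollary is the $\circ$-notation twin of Proposition~\ref{P:composition}: writing $Y=Y_0+\xi\circ X$ is the same as $\d Y_t=\xi_t(\d X_t)$, writing $Z=Z_0+\psi\circ Y$ is the same as $\d Z_t=\psi_t(\d Y_t)$, and the composition rule \eqref{eq:191219.1} then supplies both the claimed identity and, together with the self-containedness part of Proposition~\ref{P:composition}, the membership $\psi(\xi)\in\I_{0\R}^{d,m}$ and its compatibility with $X$. So the honest short proof is ``translate notation and invoke Proposition~\ref{P:composition}''. If instead one wants to see \emph{why} it holds, I would argue directly from the \'Emery formula \eqref{eq:190924.5}. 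The two facts that drive everything are elementary: since $\xi_t(0)=0$, the chain rule at the origin gives $D(\psi(\xi))_t(0)=D\psi_t(0)\,D\xi_t(0)$ and $D^2_{ij}(\psi(\xi))_t(0)=\sum_k D_k\psi_t(0)\,D^2_{ij}\xi^{(k)}_t(0)+\sum_{k,\ell}D^2_{k\ell}\psi_t(0)\,D_i\xi^{(k)}_t(0)\,D_j\xi^{(\ell)}_t(0)$; and wherever $\xi_t(\Delta X_t)$ is finite one has $\Delta Y_t=\xi_t(\Delta X_t)$, hence $\psi_t(\Delta Y_t)=(\psi(\xi))_t(\Delta X_t)$. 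The latter already yields compatibility: $(\psi(\xi))(\Delta X)=\psi(\Delta Y)$ is finite-valued $\P$--a.s.\ because $\psi$ is compatible with $Y$.

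For the identity I would expand $Z=Z_0+\psi\circ Y$ by \eqref{eq:190924.5} and substitute into it the \'Emery expansion of $\d Y_t=\xi_t(\d X_t)$. In the integral term $\int_0^\cdot D\psi_t(0)\,\d Y_t$, linearity of stochastic integration — legitimate since $D\psi(0)$ is locally bounded predictable, which is exactly the integration rule of Corollary~\ref{C:191018} — lets one replace $\d Y_t$ by its three \'Emery pieces. In the continuous-covariation term one uses that $Y^{(k)}$ differs from $\sum_i\int_0^\cdot D_i\xi^{(k)}_t(0)\,\d X^{(i)}_t$ by a finite-variation process, so $\d[Y^{(k)},Y^{(\ell)}]^c_t=\sum_{i,j}D_i\xi^{(k)}_t(0)\,D_j\xi^{(\ell)}_t(0)\,\d[X^{(i)},X^{(j)}]^c_t$. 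In the jump term one uses $\Delta Y=\xi(\Delta X)$. Collecting the $\d X$, the $\d[X,X]^c$ and the pure-jump contributions and comparing with the \'Emery formula for $(\psi(\xi))\circ X$, one gets a term-by-term match: the $\d X$ coefficient is $D(\psi(\xi))_t(0)$ by the first-order chain rule; the $\d[X^{(i)},X^{(j)}]^c$ coefficient is $D^2_{ij}(\psi(\xi))_t(0)$ by the second-order chain rule, its $D^2\psi(0)$-part arising from the $[Y,Y]^c$-term of $\psi$ and its $D\psi(0)D^2\xi(0)$-part from passing $\xi$'s second \'Emery term through $\psi$'s first; and the jump contribution, after the two copies of $\sum_{0<t\le\cdot} D\psi_t(0)\xi_t(\Delta X_t)$ cancel, collapses to $\sum_{0<t\le\cdot}\big((\psi(\xi))_t(\Delta X_t)-D(\psi(\xi))_t(0)\Delta X_t\big)$.

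The only point beyond routine chain-rule algebra is justifying that this last sum is a genuine finite-variation process, i.e.\ $\sum_{0<t\le\cdot}|(\psi(\xi))_t(\Delta X_t)-D(\psi(\xi))_t(0)\Delta X_t|<\infty$, which also legitimates the rearrangement of jump sums used above. I would get it from the splitting
\[
(\psi(\xi))_t(\Delta X_t)-D(\psi(\xi))_t(0)\Delta X_t=\big(\psi_t(\xi_t(\Delta X_t))-D\psi_t(0)\,\xi_t(\Delta X_t)\big)+D\psi_t(0)\big(\xi_t(\Delta X_t)-D\xi_t(0)\Delta X_t\big),
\]
whose first bracket is absolutely summable because it is the \'Emery jump remainder of $\psi$ along $Y$ (and $\psi\in\I_{0\R}^{n,m}$ is compatible with $Y$), and whose second bracket is absolutely summable because $\xi\in\I_{0\R}^{d,n}$ makes its own remainder summable and $D\psi(0)$ is locally bounded. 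The \emph{full} conclusion $\psi(\xi)\in\I_{0\R}^{d,m}$ requires checking all of Definition~\ref{D:core}, not merely this summability; here the efficient route is to cite Proposition~\ref{P:composition} outright rather than re-verifying the definition by hand, and that is where the real work sits. The $\I_{0\Cx}$ case is word for word the same with $D,D^2$ read as complex derivatives: analyticity of $\psi$ and $\xi$ near $0$ gives the identical chain-rule identities and the complex \'Emery formula applies unchanged. The main obstacle throughout is not the algebra but the clean bookkeeping separating continuous-martingale, continuous-finite-variation and compensated-jump components so that every integral is legitimate and nothing is double-counted — precisely the content already packaged into Proposition~\ref{P:composition}.
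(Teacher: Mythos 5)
Your proposal is correct and follows the route the paper intends: membership of $\psi(\xi)$ in $\I_{0\R}^{d,m}$ (resp.\ $\I_{0\Cx}^{d,m}$) is delegated to Proposition~\ref{P:composition}, compatibility comes from $\Delta Y=\xi(\Delta X)$, and the identity \eqref{eq:composition} is verified by substituting the \'Emery expansion of $\d Y_t=\xi_t(\d X_t)$ into that of $\psi\circ Y$ and matching the $\d X$, $\d[X,X]^c$ and jump terms via the first- and second-order chain rules at $0$, with the jump remainder's absolute summability handled by exactly the two-bracket splitting you give. This is a faithful (and more detailed) rendering of the argument the paper leaves implicit, so no gaps to report.
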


The composition rule allows the user to store some common calculations and `recycle' them later without having to revisit their detailed derivation. Suppose, for instance, that we are given the evolution of  $(\log K,\log L)$ as the primitive input. Thanks to Corollary~\ref{C:composition}, there is no need to calculate everything afresh all the way from $(\log K,\log L)$ to $\Log(\sfrac{K}{L})$. One only computes the passage from $(\log K,\log L)$ to $(\Log(K),\Log(L))$ which yields (see equation \eqref{eq: Log (e^vx) repre} below)
$$(\Log(K),\Log(L)) = \left(\e^{\id_1}-1,\e^{\id_2}-1\right)\circ (\log K,\log L),$$
while the passage from $(\Log(K),\Log(L))$ to $\Log(\sfrac{K}{L})$ can be recycled from \eqref{eq: Log(K/L) circ}. The two results composed together give
\begin{equation}\label{eq:191202.1}
\Log\left(\frac{K}{L}\right) = \left(\e^{\id_1-\id_2}-1\right)\circ (\log K, \log L).
\end{equation}

In differential notation, 
$$\d K_t =\d \e^{\ln K_t} = \e^{\ln K_{t-}+\d\ln K_t}-\e^{\ln K_{t-}} = K_{t-}(\e^{\d \ln K_t}-1) $$
substituted into \eqref{eq:190924.1} yields
$$\frac{\d \left(\sfrac{K_t}{L_t}\right)}{\sfrac{K_{t-}}{L_{t-}}} 
=\frac{1+\e^{\d \ln K_t}-1}{1+\e^{\d \ln L_t}-1}-1 = \e^{\d \ln K_t - \d \ln L_t}-1,
$$
which is the differential equivalent of formula \eqref{eq:191202.1}.
\subsection{\'Emery formula and drift computation}\label{sect: Emery}
Having mastered the art of representing one process by means of another, we would like to obtain an analogon of \eqref{eq: d(K/L) rigorous1}--\eqref{eq: drift}. Our task is to express the drift of the represented process with the help of the characteristics of the representing process. 

To begin with, we collect the predictable characteristics (i.e., the drift rate, the covariance matrix of the associated Brownian motion, and the L\'evy measure) of the input process $X = (\Log(K),\Log(L))$ in equations \eqref{eq: K jump}--\eqref{eq:191023.2} in the more compact form
\begin{equation}\label{eq:191023.3}
 b^{X[h^1]} = \left[\!\!\begin{array}{c}\mu_K\\\mu_L\end{array}\!\!\right]; \qquad c^X = \left[\!\!\begin{array}{cc}\sigma_K^2&\rho_{KL}\sigma_K\sigma_L\\ \rho_{KL}\sigma_K\sigma_L &\sigma_L^2\end{array}\!\!\right];\qquad F^X = \Pi.
\end{equation}
Here $X[h^1]$ denotes the process $X$ with jumps greater than 1 in absolute value removed,
\begin{equation}\label{eq:X[1]}
X[h^1] = X_0 + (\id_1\,\indicator{|\id_1|\leq 1},\id_2\indicator{|\id_2|\leq 1}) \circ X.
\end{equation}
Observe that this precisely matches the decomposition of jumps appearing in \eqref{eq:191023.1}--\eqref{eq:191023.2} and ensures that the drift of $X[h^1]$ is finite.%
\footnote{In contrast, the drift of $X$ need not be well-defined in general; see also Footnote~\ref{fn:1}.} 
More generally, we will denote by $X[h]$ the process containing the small jumps of $X$ as given by a specific truncation function $h$ and observe that $X[h^1]$ corresponds to the choice
\begin{equation}\label{eq:h1}
h^1=(\id_1\,\indicator{|\id_1|\leq 1},\ldots,\id_d\,\indicator{|\id_d|\leq 1}),
\end{equation} 
where $d$ is the dimension of $X$.
The mechanics of truncation are described in Definition~\ref{D:truncation}.

The reader must be warned that $X[0]$, the continuous part of $X$, is not always well-defined. For this reason, $X[0]$ cannot be universally represented in contrast to $X[h^1]$, whose universal representation appears in \eqref{eq:X[1]}. Nonetheless, the situations where $X[0]$ exists do arise in practice, for example, in the \citet{merton.76} jump-diffusion model. In such models we may write 
\begin{equation}\label{eq:191202.2}
\d X_t = \d X[0]_t + \Delta X_t.
\end{equation}
This, when substituted into \eqref{eq:190924.5}, leads to the simplified expression 
\begin{equation}\label{eq:191202.3}
\hspace{-11em}\xi_t(\d X_t) = D\xi_t(0)\d X[0]_t + \frac{1}{2} \sum_{i,j = 1}^d D^2_{ij} \xi_t(0)\, \d\!\left[X^{(i)},X^{(j)}\right]^c_t  
+\xi_t (\Delta X_t),
\end{equation}
which offers a valuable insight into the nature of the \'Emery formula. We observe that the first two terms of \eqref{eq:191202.3} correspond to the McKean calculus for the continuous part of $X$ while the last term accounts for the jumps in $X$. The two components do not interact and can be treated separately. In the most general situation where the decomposition \eqref{eq:191202.2} does \emph{not} exist, one can make \eqref{eq:191202.3} rigorous by adding small jumps to the first term and subtracting them in the last term to obtain
\begin{equation}\label{eq:181124.3}
\xi_t(\d X_t) = D \xi_t (0) \d X[h]_t+ \frac{1}{2} \sum_{i,j = 1}^d D^2_{ij} \xi_t(0)\, \d\!\left[X^{(i)},X^{(j)}\right]^c_t 
+ \left(\xi_t(\Delta X_t) - D \xi_t(0) h(\Delta X_t) \right).
\end{equation}
The original \'Emery formula \eqref{eq:190924.5} corresponds simply to the case where \emph{all} the jumps of $X$ have been added to the first term and subtracted in the last term of \eqref{eq:191202.3}.

We thus come to understand the \'Emery formula as a \emph{spectrum} of equivalent expressions where one can dial the truncation function $h$ all the way down to $0$ or all the way up to $h(x)=x$. In this sense, the truncation is unimportant -- we can always choose $h$ to suit our needs. In a univariate case, one would thus pick $h=0$ if jumps of $X$ have finite variation as in the Merton jump-diffusion model, failing that, $h=\id$ if the drift of $X$ exists, and finally $h=h^1=\id\indicator{|\id|\leq 1}$ in all remaining cases. In a multivariate case this choice can be performed component-wise. With such choice of $h$, the drift of each contributing term in \eqref{eq:181124.3} is guaranteed to be finite. 

We can now perform the feat previously achieved on a smaller scale in \eqref{eq: d(K/L) rigorous1}--\eqref{eq: drift}. By matching each term in \eqref{eq:181124.3} with its drift contribution, one obtains
\begin{equation}
\label{eq:191023.4}
 b^{\xi\circ X}   =  D\xi(0)b^{X[h]} + \frac{1}{2} \sum_{i,j = 1}^d D^2_{ij} \xi(0)c^X_{ij} 
+ \int_{\R^d} (\xi(x)-D\xi(0)h(x))F^X(\d x).
\end{equation}
Formula~\eqref{eq:191023.4} is proved in Theorem~\ref{T:compensators}.

Specifically, with $\xi$ given in \eqref{eq:190924.3} we obtain
\begin{equation*}
D\xi(0) = [1\ \ -1],\qquad D^2\xi(0) = \left[\begin{array}{cc} 0 &-1 \\ -1& 2 \end{array}\right].
\end{equation*}
For the specific input parameters in \eqref{eq:191023.3} and the corresponding truncation function in \eqref{eq:h1} the drift conversion formula \eqref{eq:191023.4} yields
\begin{equation}\label{eq:191007.1}
\begin{split}
 b^{\Log(\sfrac{K}{L})} ={}& \mu_K-\mu_L - \rho_{KL}\sigma_K\sigma_L+ \sigma_L^2 \\
&{}+ \int_{\R^2} \left(\frac{1+x_1}{1+x_2}-1-\left(x_1\indicator{|x_1|\leq1}-x_2\indicator{|x_2|\leq1}\right)\right)\Pi(\d (x_1, x_2)),
\end{split}
\end{equation}
which is the appropriate generalization of \eqref{eq: drift} provided the integral in \eqref{eq:191007.1} converges.%
\footnote{We might consider, for example, a model where the jumps in $\Log(K)$ and $\Log(L)$ are independent, in which case
$$ \Pi(\d x_1,\d x_2) = \indicator{x_2=0}\Pi^K(\d x_1)+\indicator{x_1=0}\Pi^L(\d x_2),$$
meaning capital and labour do not jump simultaneously. We may take $\Pi^K(\d x_1)$ to be lognormal so that $\int_0^\infty x_1\Pi^K(\d x_1)<\infty$ and $K$ has finite mean. The choice $\Pi^L(\d x_2)=x_2^{-2}\indicator{x_2>0}\d x_2$ then provides an example where $L$ has 
infinite variation (even with $\sigma^L = 0$) and infinite mean while the mean of $K/L$ remains finite (see also Theorem~\ref{T:PII}).
} 
 Formula \eqref{eq: E[K_T/L_T]} continues to hold with this choice of $b$; see Theorems~\ref{T:PII} and \ref{T:compensators} below.

\section{Further examples with drift computation}\label{sect:190714.1}

We will now showcase the strength of process representations such as \eqref{eq: Log(K/L) circ} when it comes to computing drifts. We will do so side-by-side with the classical approach.
Let us therefore start with an $\R$--valued L\'evy process written in the classical notation,
\begin{equation}\label{eq:180723.1}
\begin{split}
X &= X_0 + \int_0^\cdot \alpha\d s+\int_0^\cdot \sigma\d W_s
+\int_0^\cdot \int_{\vert x\vert\leq  1} x\widehat{N}(\d s,\d x)
+\int_0^\cdot \int_{\vert x\vert> 1} xN(\d s,\d x),
\end{split}
\end{equation}
where $N$ is a Poisson jump measure, $\Pi$ the corresponding L\'evy measure, 
$$\widehat{N}(\d t,\d x)=N(\d t,\d x)-\Pi(\d x)\d t$$  
the compensated Poisson jump measure, and $\alpha, \sigma \in\R$.

In the simplified stochastic calculus we will never have to write out the decomposition \eqref{eq:180723.1} in full. Instead we just  note that $X$ is an It\^o semimartingale%
\footnote{A generalization of \eqref{eq:180723.1} where $\alpha$, $\sigma^2$, and $\Pi$ are allowed to be stochastic; 
see Definition~\ref{D:Ito} below.} 
with characteristics
\begin{equation}\label{eq:180726.2}
\left(b^{X[h^1]}=\alpha,c^X=\sigma^2,F^X=\Pi\right).
\end{equation}
The notation of \eqref{eq:180726.2} emphasizes the fact that some expressions below, such as \eqref{eq: kappa^X}, remain valid even if 
$b^X$, $c^X$, and $F^X$ are stochastic.

In the next example, we will find the representation for the cumulative percentage change in $\e^{vX}$ for fixed $v\in\Cx$ and use this to compute the moment generating function of $X$.
\begin{example}[Drift of $\Log(\e^{vX})$ for $v\in\Cx$] \label{ex: Levy-Khintchin}
By integration and smooth transformation we have 
\begin{equation}\label{eq:180726.3}
\d\Log(\e^{vX})_t = \frac{\d\kern0.0833em\e^{vX_t}}{\e^{vX_{t-}}}=\frac{\e^{v(X_{t-}+\d X_t)}-\e^{vX_{t-}}}{\e^{vX_{t-}}}
=\e^{v\hspace{0.1em}\d X_t}-1,
\end{equation} 
or equivalently,
\begin{equation} \label{eq: Log (e^vx) repre}
\Log(\e^{vX})=(\e^{v\hspace{0.1em}\id}-1)\circ X.
\end{equation} 
The representing function is $\xi = \e^{v\hspace{0.1em}\id}-1$ with $\xi'(0)=v$ and $\xi''(0)=v^2$. The corresponding \'Emery formula \eqref{eq:181124.3} reads%
\footnote{A helpful mnemonic device for the \'Emery formula is shown in equation \eqref{eq:191202.3} and the subsequent paragraph.}
\begin{equation}\label{eq: Log(e^vx)}
\hspace{0.1cm}\xi_t(\d X_t) = v\d X[h]_t\ +\frac{1}{2}v^2\d [X,X]_t^c+\left(\e^{v\Delta X_t}-1-vh(\Delta X_t)\right).
\end{equation}
It is valid for \emph{any} semimartingale $X$.

It is now straightforward to compute the drift in \eqref{eq: Log(e^vx)}, provided it exists. Specifically, for  an It\^o semimartingale $X$, \eqref{eq: Log(e^vx)} yields the drift rate of
\begin{align}\label{eq: kappa^X}
\hspace{0.1cm}b^{\Log(\e^{vX})} \ =\ vb^{X[h]}\hspace{0.3cm}\ +\frac{1}{2}v^2 c^X\hspace{1.0cm}+\int_\R\left(\e^{vx}-1-vh(x)\right)F^X(\d x). \hspace{-0.55cm}
\end{align}
If, additionally, $X$ is a L\'evy process as in \eqref{eq:180723.1}--\eqref{eq:180726.2}, we obtain
$$ 
\hspace{0.1cm}b^{\Log(\e^{vX})} \ =\ \alpha v\hspace{0.9cm}\ +\frac{1}{2}\sigma^2 v^2\hspace{1.0cm}
+\int_\R\left(\e^{vx}-1-vx\indicator{\lvert x\rvert\leq 1}\right)\Pi(\d x) \hspace{-0.55cm}
$$
as long as the integral is finite and, in analogy to \eqref{eq: E[K_T]}, 
\begin{align}
\E\left[\e^{v(X_T-X_0)}\right]=\exp\left(b^{\Log(\e^{vX})}T\right).\label{eq:180726.1}
\end{align}
The drift rate $\kappa^X(v)=b^{\Log(\e^{vX})}$ is known as the cumulant function of the r.v. $X_1-X_0$.%
\footnote{\label{foot:mart}Formula \eqref{eq:180726.1} holds thanks to Theorems~\ref{T:PII} and \ref{T:compensators}. Note that 
\eqref{eq:180726.1} is in fact the L\'evy-Khintchin formula applied to the L\'evy process $X$ \citep[Theorem 8.1]{sato.99}.}%
\qed
\end{example}

\begin{remark}
Let us now consider the same calculation using the form \eqref{eq:180723.1}. It\^o's formula \citep[Theorem~4.4.7]{applebaum.09} gives
\begin{equation}\label{eq:180726.5}
\begin{split}
\e^{vX}-\e^{vX_0} &= \int_0^\cdot v\e^{vX_{s-}}(\alpha\d s+\sigma\d W_s)+\frac{1}{2}\int_0^t v^2\e^{vX_{s-}}\sigma^2\d s\\
&\qquad+\int_0^\cdot \int_{\vert x\vert\leq  1} \left(\e^{v(X_{s-}+x)}-\e^{vX_{s-}}\right)\widehat{N}(\d s,\d x) \\
&\qquad+\int_0^\cdot \int_{\vert x\vert>  1} \left(\e^{v(X_{s-}+x)}-\e^{vX_{s-}}\right)N(\d s,\d x)\\
&\qquad+\int_0^\cdot \int_{\vert x\vert\leq  1} \left(\e^{v(X_{s-}+x)}-\e^{vX_{s-}}-v\e^{vX_{s-}}x\right)\Pi(\d x)\d s.
\end{split}
\end{equation}
Integration yields \citep[Section~4.3.3]{applebaum.09}
\begin{equation}\label{eq:180726.6}
\begin{split}
\Log(\e^{vX}) =& \int_0^\cdot \e^{-vX_{s-}}\d \e^{v X_s}\\
=&\int_0^\cdot\left(\alpha v+\frac{1}{2}\sigma^2 v^2\right)\d s+\int_0^\cdot\sigma v\d W_s
+\int_0^\cdot \int_{\vert x\vert\leq  1} (\e^{vx}-1)\widehat{N}(\d s,\d x) \\
&+\int_0^\cdot \int_{\vert x\vert>  1} (\e^{vx}-1)N(\d s,\d x)+\int_0^\cdot \int_{\vert x\vert\leq  1} (\e^{vx}-1-vx)\Pi(\d x)\d s.
\end{split}
\end{equation}
Finally, the drift rate is evaluated by computing the drift of each contributing term in \eqref{eq:180726.6},
\begin{equation*}
\begin{split}
b^{\Log(\e^{vX})} &= \alpha v+\frac{1}{2}\sigma^2 v^2+0+0\\
&\qquad+\int_{\vert x\vert>  1} (\e^{vx}-1)\Pi(\d x)
+\int_{\vert x\vert\leq  1} \left(\e^{vx}-1-vx \right)\Pi(\d x).
\end{split}
\end{equation*}
The calculations \eqref{eq:180726.5}--\eqref{eq:180726.6} become much easier in the approach \eqref{eq:180726.3}--\eqref{eq: Log(e^vx)} because the rules of the simplified stochastic calculus are more compact and easier to remember. 
\qed
\end{remark}
\setstretch{1.1}
The main advantage of the simplified calculus is that one does not have to keep track of the drift, volatility, and jump intensities through intermediate calculations. In the next example we will evaluate all three characteristics of the process $Y=\Log(\e^{vX})$ when $X$ is  an It\^o semimartingale. Having all three characteristics is not necessary for our purposes; this example merely shows that the characteristics are easily recalled at any moment --- if needed.

\begin{example} [Characteristics of a represented It\^o semimartingale] \label{ex: characteristics}
Consider $\xi \in \I_{0\R}^{d,n}\cup\I_{0\Cx}^{d,n}$  compatible with an It\^o semimartingale $X$. 
\begin{enumerate}
\item For the `volatility' of the represented process 
$Y = Y_0 + \xi\circ X$ we obtain
$$ c^Y = D\xi(0)c^X D\xi(0)^\top.$$

\item To compute the drift of $Y[g]$, observe that one naturally obtains 
$$Y[g]=Y_0+g\circ Y,$$ 
for any truncation function $g$ equal to identity on a neighbourhood of zero (see Proposition~\ref{P:universal truncation}). 
The chain rule \eqref{eq:composition} now yields $Y[g]=Y_0+ g(\xi)\circ X$. As $D g(0)$ is  by assumption an identity matrix,  we have 
$D (g\circ\xi)(0) = D\xi(0)$,  $D^2(g\circ\xi) = D^2\xi(0)$, and the desired drift is 
\begin{equation*} 
b^{Y[g]} = D\xi(0)b^{X[h]}+\frac{1}{2} \sum_{i,j=1}^d D^2_{ij}\xi(0)c_{ij}^X + \int_\R \left(g(\xi(x))-D\xi(0) h(x)\right)F^X(\d x).
\end{equation*}

\item Finally, let $G$ be a closed $n$--dimensional set not containing zero. The process $\indicator{\id\in G}\circ Y$ counts the jumps of $Y$ whose size is in $G$; its drift yields the jump arrival intensity $F^Y(G)$. The chain rule \eqref{eq:composition} gives $\indicator{\id\in G}\circ Y = \indicator{\xi\in G}\circ X$. The function $\psi = \indicator{\xi\in G}$ satisfies $D\psi(0)=D^2\psi(0)=0$ which yields
\begin{equation*}
	F^Y(G) = b^{\psi\circ X} = 0 + 0 + \int_\R\indicator{\xi(x)\in G}F^X(\d x).
\end{equation*}
Thus, for each $(\omega,t)$, we recognize $F^Y$ as the image (a.k.a.~push-forward) measure of $F^X$ obtained via the mapping 
$\xi$. 
\end{enumerate}

For concreteness, set $Y=\Log(\e^{vX})=(\e^{v\hspace{0.1em}\id}-1)\circ X$ for some fixed $v \in \Cx$ and take $X$ to be the L\'evy process defined by~\eqref{eq:180723.1}. As $\xi(x)=\e^{vx}-1$, we obtain $\xi'(0)=v$ and $\xi''(0)=v^2$, and 
\begin{align*} b^{Y[h^1]}&=\alpha v+\frac{1}{2}\sigma^2 v^2+
\int_\R \left((\e^{vx}-1)\indicator{\lvert \e^{vx}-1\rvert\leq 1}-vx\indicator{\lvert x\rvert\leq 1}\right)\Pi(\d x);\\
c^Y &= \sigma^2 v^2;\\
F^Y(G) &= \int_\R\indicator{G}(\e^{vx}-1)\Pi(\d x).
\end{align*}
We thus conclude that if $X$ is a L\'evy process then $Y=\Log(\e^{vX})$ is again a L\'evy process for all $v\in\Cx$.
\qed
\end{example}

The next example illustrates the convenience of composing two representations without having to work with their predictable characteristics.

\begin{example}[Maximization of exponential utility] \label{ex: max exp utility}
Fix a time horizon $T > 0$, and assume that $X$ is a one-dimensional L\'evy process given by \eqref{eq:180723.1}--\eqref{eq:180726.2}. Consider an economy con\-si\-sting of one bond with constant price $1$ and of one risky asset with price process $S = \e^X$. Moreover, consider an agent with exponential utility function $u: w \mapsto - \e^{-w}$.

Since $X$ is assumed to have stationary and independent increments and since we consider an exponential utility function, it is reasonable to conjecture that the optimal portfolio is a constant dollar amount $\lambda \in \R$ invested in the risky asset. Denote by $R=\Log(\e^X)$ the cumulative yield on an 1\$ investment in the risky asset. Normalizing initial wealth to zero, the optimal wealth process equals $\lambda R$ and its expected utility is $\E[\e^{-\lambda R_T}]$. In analogy to \eqref{eq: E[K_T]} the expected utility can be obtained via the time rate of the expected percentage change of the quantity 
$\e^{-\lambda R}$.  
This is nothing other than the drift rate of the process $\Log(\e^{-\lambda R})$. Provided this drift, commonly denoted by $\kappa^R(-\lambda)$, is finite, the expected utility will be equal to $\E[\e^{-\lambda R_T}]=\e^{\kappa^{R}(-\lambda)T}$, cf. \eqref{eq:180726.1}.

Formula \eqref{eq:180726.3} and the composition rule~\eqref{eq:191219.1} give $\d R_t=\sfrac{\d \e^{X_t}}{\e^{X_{t-}}} =\e^{\d X_t}-1$ and
\begin{align} \label{eq: Log (e^{-lambda R})}
\d \Log(\e^{-\lambda R})_t=\frac{\d\e^{-\lambda R_{t}}}{\e^{-\lambda R_{t-}}} &= \e^{-\lambda\d R_t}-1 = \e^{-\lambda(\e^{\d X_t}-1)}-1. 
\end{align}
For $\xi(x) = \e^{-\lambda(\e^x-1)}-1$ one has $\xi'(0)=-\lambda$ and $\xi''(0)=\lambda^2 - \lambda$. Hence the desired drift reads
\begin{equation} \label{eq: kappa^R}
	\kappa^{R}(-\lambda) =  b^{\Log(\e^{-\lambda R})}  =  - \alpha\lambda + \frac{\sigma^2}{2}\left(\lambda^2 - \lambda\right) +\int_{\R}\left(
\e^{-\lambda \left( \e^{x}-1\right) }-1+\lambda x\indicator{\lvert x\rvert\leq 1}\right) \Pi(\d x).
\end{equation}
This expresses the cumulant function of $R$ by means of the jump intensity of the process $X$. 

Under the non-restrictive assumptions of \citet[Corollary~3.4]{fujiwara.miyahara.03} the expression \eqref{eq: kappa^R} is finite for all $\lambda\in \R$ and $\lambda\mapsto \kappa^R(-\lambda)$ has a unique maximizer $\lambda_*$ \citep[Proposition~3.3]{fujiwara.miyahara.03}. Under the same assumptions, $R$ is locally bounded and it follows from the results in \citet{biagini.cerny.11} that $\sfrac{\lambda_*}{S_-}$ is the optimal strategy in a sufficiently wide class of admissible strategies for trading in $S$, therefore $\lambda_*$ is the optimal dollar amount to be invested. \qed
\end{example}
\begin{remark} \label{rem: no characteristics}
In \citet{fujiwara.miyahara.03} the previous calculation is performed in two steps: first the characteristics of the yield process $R = \Log(\e^{X})$ are computed and these are then plugged into the L\'evy-Khintchin formula \eqref{eq: kappa^X} of Example~\ref{ex: Levy-Khintchin} to evaluate the cumulant function $\kappa^R(-\lambda)$, which after some cancellations and change of variables gives \eqref{eq: kappa^R}. The two-stage procedure is akin to using $\d t, \d W$ notation which, too, forces the user to keep track of the characteristics at every step of a multistage calculation, see 
\eqref{eq: K}--\eqref{eq: L} 
and \eqref{eq: ItoCondensed}. The simplified stochastic calculus allows us to maintain a model-free formulation until the very end so that the drift calculation is performed only once, when the drift is finally needed. \qed
\end{remark}

The online appendix \cite{cr_online} discusses affine models and the derivation of Riccati equations as an additional example.
\setstretch{1.2} 

\section{Drift under a change of measure} \label{sect:190620} Next we will demonstrate that the simplified calculus becomes very powerful when it comes to evaluating drifts under a different measure.
\begin{example} [Minimal entropy martingale measure] \label{ex: MEMM}
Let us continue in the economic setting of Example~\ref{ex: max exp utility} with the stock price process $S=\e^X$, dollar yield process $R=\Log(\e^X)$, exponential utility $u:w\mapsto \e^{-w}$, and optimal wealth process $\lambda_* R$. Under the assumptions of \citet[Corollary~3.4]{fujiwara.miyahara.03} the Radon-Nikodym derivative $Z_T = \sfrac{\d \Qu }{ \d \P}|_{\sigalg F_T}$ of the representative agent pricing measure  is proportional to the marginal utility evaluated at the optimal wealth, that is, to $\e^{-\lambda_* R_T}$; see \citet[Theorem~3.1 and Corollary~4.4(3)]{fujiwara.miyahara.03}. This $\Qu$ is known in the literature as the minimal entropy martingale measure 
and the corresponding density process $Z$ satisfies $Z_t = \e^{-\lambda_* R_t-\kappa^R(-\lambda_*)t}$ for all $t \in [0,T]$. The process $Z$ is  a true martingale thanks to Corollary~\ref{C:compensatorsPIIQ}. 

\setstretch{1.15}
To value contingent claims on the stock $S = \e^X$, it is necessary to compute the characteristic function of $X$ under $\Qu$.  The required cumulant function $\kappa_\Qu^X(v)$ is just the expected rate of change of $\e^{vX}$ under $\Qu$, i.e., the  $\Qu$--drift rate of 
$\Log(e^{vX})_t$. By Theorem~\ref{T:Girsanov2}\ref{T:Girsanov2:ii} and Example~\ref{E:180809} this $\Qu$--drift  is the same as the $\P$--drift of
\begin{align*}
\d\Log(e^{vX})_t+\d\bigs[\Log(e^{vX}),\Log(\e^{-\lambda_* R})\bigs]_t
&=\d\Log(e^{vX})_t+\d\Log(e^{vX})_t\d\Log(\e^{-\lambda_* R})_t\\ 
&= (\e^{v\d X_t} - 1)\e^{-\lambda_*(\e^{\d X_t}-1)},
\end{align*}
where the second equality combines the representations of $\Log(e^{vX})$ and $\Log(\e^{-\lambda_* R})$ obtained earlier in \eqref{eq:180726.3} and \eqref{eq: Log (e^{-lambda R})}.

The function $\psi(x) = (\e^{vx} - 1)\e^{-\lambda_*(\e^x-1)}$ satisfies 
$\psi'(0)=v$ and $\psi''(0)=v^2-2\lambda_* v$. 
Consequently, if the drift exists, the $\P$--drift rate of $\psi\circ X$ reads
\begin{equation}\label{eq:191106.1}
b^{\psi\circ X}=vb^{X[h]} + \frac{c^X}{2}\left(v^2 - 2\lambda_* v\right) + \int_{\R} \left(\psi(x)  - v h(x)\right) F^X(\d x).
\end{equation}
In the L\'evy setting this yields
\begin{equation}\label{eq:180806.1}
\begin{split}
	\hspace{-0.3cm}\kappa_\Qu^X(v) &= b_\Qu^{\Log(\e^{vX})} = b^{\Log(\e^{vX})+[\Log(\e^{vX}),\Log(\e^{-\lambda_* R})]}\\
	&=\alpha v + \frac{\sigma^2}{2}\left(v^2 - 2\lambda_* v\right) + \int_{\R} \left((e^{vx} - 1) \e^{-\lambda_*(\e^x-1)}  - v x\indicator{\lvert x \rvert\leq 1}  \right) \Pi(\d x),
\end{split}
\end{equation}
whenever the integral on the right-hand side is finite. \qed
\end{example}

\begin{remark}
The standard calculus using the formulation \eqref{eq:180723.1} requires much more work. First, one must find an explicit expression for $\log Z$, which after a significant effort reads
\begin{equation*}
\begin{split}
\log Z =& -\int_0^\cdot \lambda_*\sigma\d W_s-\frac{1}{2}\int_0^\cdot \lambda_*^2\sigma^2\d s+\int_0^\cdot \int_\R -\lambda_*(\e^x-1)\widehat{N}(\d s,\d x)\\
&+\int_0^\cdot\int_\R \left(-\lambda_*(\e^x-1)-\left(\e^{-\lambda_*(\e^x-1)}-1\right)\right)\Pi(\d x)\d s,
\end{split}
\end{equation*}
assuming $\ln Z$ has finite mean. Next, one constructs a new Brownian motion for the measure~$\Qu$,
$$\d W^\Qu_t = \d W_t + \lambda_*\sigma\d t,$$
 and a new compensated Poisson jump measure 
$$\widehat{N}^\Qu(\d t,\d x) = \widehat{N}(\d t,\d x) + \left(1-\e^{-\lambda_*(\e^x-1)}\right)\Pi(\d x)\d t,$$
both using a custom-made formula, see \citet[Theorem ~5.2.12 and Exercise~5.2.14]{applebaum.09} and \citet[Theorem~1.32 and Lemma~1.33]{oksendahl.sulem.07}.

These quantities are then substituted into \eqref{eq:180726.6} to obtain
\begin{equation}\label{eq:180806.2}
\begin{split}
\Log(\e^{vX}) &= \int_0^\cdot \left(\alpha v+\frac{\sigma^2}{2} \left(v^2-2\lambda_* v\right)\right)\d s\\
&\qquad +\int_0^\cdot \sigma v\d W^\Qu_s+\int_0^\cdot\int_\R (\e^{vx}-1)\widehat{N}^\Qu(\d x,\d s) \\
&\qquad +\int_0^\cdot \int_{\vert x\vert\leq  1} \left(\e^{-\lambda_*(\e^x-1)}(\e^{vx}-1)-vx\right)\Pi(\d x)\d s\\
&\qquad +\int_0^\cdot \int_{\vert x\vert>  1}\e^{-\lambda_* (\e^x-1)}(\e^{vx}-1)\Pi(\d x)\d s,
\end{split}
\end{equation}
provided the $\Qu$--drift of $\Log(\e^{vX})$ exists.
The drift is now available by summing up the first, fourth, and fifth term in \eqref{eq:180806.2}. In \eqref{eq:180806.1} the same result is available directly after plugging the specific form $h(x)=x\indicator{\vert x\vert\leq 1}$ and the characteristics \eqref{eq:180726.2} into the formula \eqref{eq:191106.1}. The main difference between the two approaches is that \eqref{eq:191106.1} is more compact and arguably much easier to obtain than \eqref{eq:180806.2}.\qed
\end{remark}
\vspace{-0.2cm}
We conclude this section with a bivariate example that makes use of a non-equivalent change of measure.
\vspace{-0.2cm}\setstretch{1.2}
\begin{example}[An option to exchange one defaultable asset for another]\label{E:181124.1}
Fix $d = 2$ and let $X=(X^{(1)},X^{(2)})$ be an $\R^2$--valued L\'evy martingale with the characteristic triplet
$$\left(\left[\begin{array}{cc} 0 \\ 0\end{array}\right],
\left[\begin{array}{cc}\sigma_1^2 & \sigma_{12} \\ \sigma_{12} & \sigma_2^2\end{array}\right],\Pi\right)$$
relative to the truncation function $h(x)=x$.

Consider next two assets with price dynamics given by stochastic exponentials (see \ref{eq:191119}) as 
\[
	S^{(1)} = S^{(1)}_0 \Exp\bigs(X^{(1)}\bigs) = S^{(1)}_0  + \int_0^\cdot S^{(1)}_{t-}\d X^{(1)}_t; \qquad 
	S^{(2)} = S^{(2)}_0  \Exp\bigs(X^{(2)}\bigs) = S^{(2)}_0  + \int_0^\cdot S^{(2)}_{t-}\d X^{(2)}_t.
\]
In financial economics one interprets $\Exp(X)$ as the value of a closed fund with initial investment of \$1 following a trading strategy whose cumulative rate of return equals $X$. For the sake of simplicity, we assume the existence of some risk-free asset that pays zero interest rate.
We furthermore assume that the L\'evy measure $\Pi$ is supported on $[-1,\infty)\times [-1,\infty)$ meaning both assets can default, perhaps simultaneously.

To value an option to exchange asset $S^{(1)}$ for asset $S^{(2)}$ on a specific date $T$, one must compute the expectation
$$ p = \E\left[\left(S^{(1)}_T-S^{(2)}_T\right)^+\right];$$
see \citet{margrabe.78}. Here the expectation is taken with respect to the valuation measure with the risk-free asset as num\'eraire, which explains why $X$ is a martingale.

Let $\Qu_k$ be the valuation measure with $S^{(k)}$ as a num\'eraire, that is, $\sfrac{\d\Qu_k}{\d\P}=\sfrac{S^{(k)}_T}{S^{(k)}_0}$ for each $k\in \{1,2\}$. Then one obtains an alternative expression for the price of the Margrabe option, namely%
\footnote{Note $\Qu^{(k)}$ is not necessarily equivalent but only absolutely continuous with respect to $\P$ since the event $\{S^{(k)}_T = 0\}$ is allowed to have positive probability under $\P$. Nonetheless, the formula \eqref{eq:181120.4} remains valid because the option payoff is zero whenever $S^{(1)}_T$ is zero. For other applications of defaultable num\'eraires see, for example, \citet{FPRuf} and references therein.} 
\begin{equation} 
 p = S^{(1)}_0  \E^{\Qu_1}\left[\left(1-\frac{S^{(2)}_T}{S^{(1)}_T}\right)^{\!\!+}\,\right].\label{eq:181120.4}
\end{equation}

To evaluate \eqref{eq:181120.4} by integral transform methods one needs to compute the expectation
\begin{equation}\label{eq:181127.1}
 \E^{\Qu_1}\left[\indicator{\left\{S_T^{(2)} >0 \right\}} \left(\frac{S^{(2)}_T}{S^{(1)}_T}\right)^{\!\!v}\,\right]
\end{equation}
for certain values $v\in\Cx$. Let us fix such $v$. In the absence of default (of either asset), the computation of \eqref{eq:181127.1} is achieved by evaluating the expected rate of change of $V=\indicator{\left\{S_T^{(2)} >0 \right\}}\left( \sfrac{S^{(2)}}{S^{(1)}}\right)^{v}$ under the measure $\Qu_1$, in analogy to Example~\ref{ex: MEMM}. This is an easy exercise in the simplified stochastic calculus: for a semimartingale $Y$ with  $Y>0$ and $Y_->0$ one obtains 
$$\frac{\d Y_t^v}{Y_{t-}^v} = \frac{(Y_{t-}+\d Y_t)^v-Y_{t-}^v}{Y_{t-}^v} =\left(1+\frac{\d Y_t}{Y_{t-}}\right)^v-1,$$
which yields
$$\Log(Y^v) =  ((1+\id)^v-1)\circ \Log(Y).$$
Composition with $\Log(\sfrac{S^{(2)}}{S^{(1)}})=\left(\sfrac{(1+\id_2)}{(1+\id_1)}-1\right)\circ (\Log(S^{(1)}),\Log(S^{(2)}))$ then gives
\begin{equation}\label{eq:181124.4}
\Log(V)=\left( \left(\frac{1+\id_2}{1+\id_1}\right) ^{\!\!v}-1\right)\circ X.
\end{equation}
Representation~\eqref{eq:181124.4} together with $\Log\bigs(S^{(1)}\bigs)=X^{(1)} = \id_1\circ X$ yields
\begin{equation}\label{eq:181124.1}
\Log(V)+\bigs[\Log(V),\Log\bigs(S^{(1)}\bigs)\bigs] = (1+\id_1)\left( \left(\frac{1+\id_2}{1+\id_1}\right) ^{\!\!v}-1\right)\circ X.
\end{equation}
Let us denote the  function appearing on the right-hand side of \eqref{eq:181124.1} by $\psi$.
In conclusion, without default (neither $S^{(1)}$ nor $S^{(2)}$ is allowed to hit zero)  one obtains from Corollary~\ref{C:compensatorsPIIQ}\ref{C:compensatorsPIIQ:ii} that
\begin{equation}\label{eq:181127.4}
\E^{\Qu_1}\left[\indicator{\left\{S_T^{(2)} >0 \right\}}\left(\frac{S^{(2)}_T}{S^{(1)}_T}\right)^{\!\!v}\,\right] 
= \left(\frac{S^{(2)}_0}{S^{(1)}_0}\right)^v \exp\left(b^{\psi\circ X}T\right).
\end{equation}

In the presence of default (i.e., if either $S^{(1)}$ or $S^{(2)}$ may hit zero), $V$ is no longer a $\P$--semimartingale. However, 
$$S_\uparrow^{(1)} = S^{(1)}_0 \Exp\bigs(\indicator{\id_1\neq-1}\,\id_1\circ X\bigs)$$
is $\Qu_1$--indistinguishable from $S^{(1)}$ with $S_\uparrow^{(1)}>0$ and  $S_{\uparrow-}^{(1)}>0$, $\P$--a.s. Therefore, the process
\begin{align*}
V_\uparrow=\indicator{\left\{S^{(2)} >0 \right\}}\left(\frac{S^{(2)}}{S_\uparrow^{(1)}}\right)^{\!\!v} 
&= \left(\frac{S^{(2)}_0}{S^{(1)}_0}\right)^{\!\!v}  \Exp\left(-\indicator{\id_2= -1}\circ X\right)
\left(\frac{\Exp\left(\id_2\indicator{\id_2\neq -1}\circ X\right)}{\Exp\left(\id_1\indicator{\id_1\neq -1}\circ X\right)}\right)^{\!\!v}\\
&= \left(\frac{S^{(2)}_0}{S^{(1)}_0}\right)^{\!\!v}  \Exp\left(
\left( \left(\frac{1+\id_2 \indicator{\id_2\neq -1}}{1+\id_1 \indicator{\id_1\neq-1}}\right) ^{\!\!v}\indicator{\id_2\neq -1}-1\right)\circ X
\right)
\end{align*}
is a $\P$--semimartingale $\Qu_1$--indistinguishable from $V$. Corollary~\ref{C:compensatorsPIIQ}\ref{C:compensatorsPIIQ:ii} shows that \eqref{eq:181127.4}  goes through 
with a modified jump transformation function
\begin{equation}\label{eq:181127.5}
\psi\left(x_1,x_2\right) = \left(1+x_1\right)
\left( \indicator{x_2\neq -1}\left(\frac{1+\indicator{x_2\neq -1}x_2}{1+\indicator{x_1\neq-1}x_1}\right) ^{\!\!v}-1\right).
\end{equation}

We now proceed to compute the drift rate $b^{\psi\circ X}$ with $\psi$ in \eqref{eq:181127.5}. To this end, note that
\begin{align*}
D\psi (0)=v\left[ \begin{array}{cc} -1 & 1 \end{array} \right];\qquad 
D^{2}\psi (0)=v(v-1)\left[ \begin{array}{cc} 1 & -1 \\ -1 & 1\end{array}\right].
\end{align*}
Next,   formula~\eqref{eq:191023.4} with $h(x)=x$ for all $x \in \R$ yields
\begin{align*}
b^{\psi \circ X}=&\frac{1}{2}\left( \sigma _{1}^{2}-2\sigma_{12}+\sigma _{2}^{2}\right) v(v-1) \\
&+\int_{\R^{2}}\left( \left( 1+x_1\right) 
\left( \left( \frac{1+x_2}{1+x_1}\right) ^{\!\!v}\indicator{x_2\neq -1}-1\right)
+vx_1-vx_2\right) \Pi \left(\d x_1,\d x_2\right)\\
=&\frac{1}{2}\left( \sigma _{1}^{2}-2\sigma_{12}+\sigma _{2}^{2}\right) v(v-1) - \lambda_2^{\Qu_1} 
+ v\left(\lambda _{2}^{\Qu_{1}}-\lambda _{1}^{\Qu_{2}}\right)\\
&+\int_{(-1, \infty)^2}\left( \left( 1+x_1\right) \left( \left( \frac{1+x_2}{1+x_1}\right) ^{\!\!v}-1\right)
+vx_1-vx_2\right) \Pi \left(\d x_1,\d x_2\right),
\end{align*}
as long as the expectation in \eqref{eq:181127.4} is finite. Here, the coefficient 
$$\lambda_{2}^{\Qu_1} = \int_{\R^2}\left(1+x_1\right)\indicator{x_2=-1}\Pi\left(\d x_1,\d x_2\right)$$ 
signifies the arrival intensity of default of asset $2$ under the probability measure $\Qu_1$ and $\lambda_{1}^{\Qu_2}$ has the converse meaning.%
\footnote{The coefficient $\lambda_2^{\Qu_1}$ is the drift rate of the process $\indicator{\id_2=-1}\circ X$ under $\Qu_1$; 
see Theorem~\ref{T:Girsanov2}\ref{T:Girsanov2:ii}.}
 Observe that without default $\kappa(v)=b^{\psi\circ X}$ can be interpreted as the cumulant function of $\ln \sfrac{S^{(2)}_1}{S^{(1)}_1} - \ln \sfrac{S^{(2)}_0}{S^{(1)}_0}$ under $\Qu_1$.

 For concreteness let us now assume that, in the  absence of default, our model follows a bivariate \citet{merton.76} jump-diffusion. In other words, on the open interval $(-1,\infty)\times (-1,\infty)$, the measure $\Pi$ is a fixed multiple $\lambda\geq 0$ of a push-forward measure  of a bivariate normal distribution with parameters 
$$\left(\left[\begin{array}{cc} m_1 \\ m_2 \end{array}\right],
\left[\begin{array}{cc} s_1^2 & s_{12} \\  s_{12} & s_2^2\end{array}\right]\right)$$
through the mapping $(\e^{\id_1}-1,\e^{\id_2}-1)$.
Once the integrals have been evaluated one obtains 
\begin{align*}
\kappa(v) = b^{\psi \circ X}\ =\ &\frac{1}{2}\left( \sigma _{1}^{2}-2\sigma
_{12}+\sigma _{2}^{2}\right) v(v-1) -\lambda _{2}^{\Qu_{1}}\\
&+v\left(\lambda \left( \e^{m_{1}+\frac{1}{2}s_{11}}-\e^{m_{2}+\frac{1}{2}s_{22}}\right) +\lambda _{2}^{\Qu_{1}}-\lambda _{1}^{\Qu_{2}}\right) 
\\
&+\lambda \e^{(1-v)m_{1}+vm_{2}+\frac{1}{2}(1-v)^{2}s_{11}+v(1-v)s_{12}+\frac{1}{2}v^{2}s_{22}}-\lambda 
\e^{m_{1}+\frac{1}{2}s_{11}}.
\end{align*}

Continuing now with the Fourier transform,  \citet[Lemma~4.1]{hubalek.al.06} yields
\[ (1-x)^{+}=\indicator{x=0}+\indicator{x> 0}\int_{\beta +i\R}g(v)x^{v}\d v, \qquad x \geq 0, \]
where $g (v)=\frac{1}{2\pi i}\frac{1}{v(v-1)}$ and $\beta<0$.
Consequently,  using \eqref{eq:181120.4}, the price of the Margrabe option is given as 
\begin{align*}
\frac{p}{S^{(1)}_0} &=   \Qu_{1}\left[S_{T}^{(2)}
=0\right]+\int_{\beta +i\R}g (v)\E^{\Qu_{1}}\left[ \indicator{\left\{S_T^{(2)} >0 \right\}}
\left(\frac{S_{T}^{(2)}}{S_{T}^{(1)}}\right) ^{\!\!v}\right] \d v \\
&=\e^{\kappa(0)T}+  \int_{\beta +i\R} \left(\frac{S_{0}^{(2)}}{S_{0}^{(1)}}\right) ^{\!\!v}\ \frac{1}{2\pi i}\frac{1}{v(v-1)}
\e^{\kappa(v)T}\d v,
\end{align*}
where $\kappa(0) = - \lambda_2^{\Qu_1}$. The integrals are well-defined and Fubini may be applied in the first equality because 
the function $v \mapsto |g(v)|  \indicator{\{S_T^{(2)} >0 \}} (\sfrac{S_T^{(2)}}{S_T^{(1)}})^{\Re v}$ 
is product-integrable on $\beta + i \R$. 
\qed
\end{example}
\setstretch{1.2}

\section{Jumps at predictable times}\label{sect: predictable}
This section illustrates the simplified stochastic calculus in a discrete-time model. The examples below preserve the independent increments feature of the Brownian and the L\'{e}vy-based examples in Subsection~\ref{sect: Brownian} and Sections~\ref{sect:190714.1} and \ref{sect:190620}. This forces the jumps to occur at fixed times. For simplicity we assume these times can be enumerated in an increasing sequence without an accumulation point. In general, the jumps could occur at all rational times, and if we dropped the independent increments assumption, also at random predictable times. These advanced features are handled in full generality in the companion papers \citet{crII,crIII}.

\begin{example}[Maximization of expected utility]\label{E:utility discrete} Denote by $S>0$ the value of a risky asset and assume the logarithmic price $X=\ln S$ is a discrete-time process (Definition~\ref{D:Ito}) with independent and identically distributed increments. Namely, for each $k\in\N$ we let the distribution of $\Delta X_k$ take three values, $\log 1.1,0$, and $\log 0.9$, 
with probabilities $p_u$, $p_m$, and $p_d$, respectively. With zero risk-free rate the value of a fund 
investing \$1 in the risky asset equals $R = \Log(\e^X)$.

To evaluate the expected utility 
$$\E\left[\e^{-\lambda R_t}\right],\qquad t \geq 0,$$
we recall from Example~\ref{ex: max exp utility} the representation of the cumulative percentage change in $\e^{-\lambda R}$. Specifically, from  \eqref{eq: Log (e^{-lambda R})} one obtains $\Log(\e^{-\lambda R}) =\eta\circ X$ with 
\begin{equation}\label{eq:eta} 
\eta(x) = \e^{-\lambda(\e^x-1)}-1.
\end{equation}
Formulae~\eqref{eq:b3} and \eqref{eq:K_discrete} now yield 
\begin{equation}\label{eq:181005.2}
\begin{split}
\E\left[\e^{-\lambda R_t}\right] &= \prod_{k=1}^{\lfloor t\rfloor} \E\left[ 1+\eta_k(\Delta X_k)\right]\\
&=\prod_{k=1}^{\lfloor t\rfloor}  \E\left[ \e^{-\lambda(\e^{\Delta X_k}-1)}\right]
= \left(p_u \e^{-0.1\lambda}+p_m+p_d \e^{0.1\lambda}\right)^{\lfloor t\rfloor},
\end{split}
\end{equation}
for all $t\geq 0$. \qed
\end{example}
\begin{example}[Minimal entropy martingale measure]
We now  compute the minimal entropy martingale measure in the setting of the previous example for some given time horizon $T > 0$. 
Optimizing \eqref{eq:181005.2} over $\lambda$, one obtains an explicit expression for the optimal dollar amount in the risky asset, namely
$$\lambda_* = \frac{\log(\sfrac{p_u}{p_d})}{0.2}.$$ 
Then the random variable $\sfrac{\e^{-\lambda_*R_T}}{\E[\e^{-\lambda_*R_T}]}$ gives the density $\sfrac{\d\Qu}{\d\P}$ of the minimal entropy martingale measure $\Qu$. As in Example~\ref{ex: MEMM}, we seek the expected value of $\e^{vX_{t}}$ under $\Qu$, for fixed $t \in [0,T]$ and $v\in\Cx$, provided the expectation is finite. Because $\Log(\e^{vX})=\xi\circ X$ with 
$$\xi(x)=\e^{vx}-1,$$ 
the desired expectation is given by Corollary~\ref{C:compensatorsPIIQ}\ref{C:compensatorsPIIQ:i} with $\eta$ from \eqref{eq:eta} as follows,  
\begin{align*}
\E^\Qu\left[\e^{vX_t}\right]&=\prod_{k=1}^{\lfloor t\rfloor}
\frac{\E\left[(1+\eta_k(\Delta X_k)) (1+\xi_k(\Delta X_k)) \right]}{\E\left[1+\eta_k(\Delta X_k)\right]}\\
&=\prod_{k=1}^{\lfloor t\rfloor}
\frac{\E\left[\e^{-\lambda_*(\e^{\Delta X_k}-1)}\e^{v\Delta X_k}\right]}{\E\left[\e^{-\lambda_*(\e^{\Delta X_k}-1)}\right]}
=\left(\frac{(1.1^v+0.9^v)\sqrt{p_u p_d}+p_m}{2\sqrt{p_up_d}+p_m}\right)^{\lfloor t\rfloor}\!\!\!\!\!,\quad t \geq 0.
\end{align*}
Here $\E^\Qu[\e^{vX_t}]$ considered as a function of $v\in\Cx$ gives the moment generating function of $X_t=\log S_t$ under $\Qu$ for each $t \geq 0$ and  can be therefore used to price contingent claims by integral transform methods.
\qed
\end{example}

\section{Concluding remarks}\label{sect: conclusions}
\setstretch{1.25}
In this paper we have introduced the notion of `$X$--representation' to describe a generic modelling situation where one starts from a (multivariate) process $X$ whose predictable $\P$--cha\-ra\-cte\-ri\-stics 
are given as the primitive input to the problem. The process $X$, which is trivially representable, is transformed by several applications of the composite rules~\eqref{eq: integration rule}--\eqref{eq: Ito rule} to another process $Y$ which is also $X$--representable. In many situations the required end product is the $\P$--drift of $Y$. These examples include i)\ the construction of partial integro-differential equations from martingale criteria \cite[e.g.,][Theorem 3.3]{vecer.xu.04}; ii) the computation of exponential compensators \citep[e.g.,][Proposition 11.2]{duffie.al.03}; iii) the formulation of optimality conditions for various dynamic optimization problems \citep[e.g.,][Theorem 3.1(v)]{oksendahl.sulem.07}. 

Existing methods force us to keep track of the characteristics (drift, volatility, and jump intensities) throughout all intermediate calculations \citep[e.g.,][Theorem 1.14]{oksendahl.sulem.07}. One of the drawbacks of describing processes via their characteristic triplets  is that the drift and the jump intensities are measure-dependent and the drift additionally also depends on the truncation function $h$.  The new calculus, in contrast, works with $X$--representations, which themselves do not depend on the characteristics in an overt way. This makes individual steps such as change of variables much simpler and the overall calculus more transparent and easy to use. An $X$--representation is converted into a drift only when the drift is really needed. 

The proposed calculus emphasizes the universal nature of transformations such as stochastic integration or change of variables, which can typically be applied in the same way to any starting process $X$. For example, the conversion from the rate of return 
$\sfrac{\d X_t}{X_{t-}}$ to the logarithmic return $\d \ln X_t$ always takes the form $\d \ln X_t = \ln \left(1+\sfrac{\d X_t}{X_{t-}}\right)$. Robust results such as this are helpful in two ways. They offer an easy way to visualize fundamental relationships and separate what is fundamental from what is model-specific. Secondly, they open an avenue for studying richer models where, say, a Brownian motion is replaced with a more general process with independent increments. In the proposed calculus this is possible without additional overheads as long as the Markovian structure of the problem remains unchanged.

Further advantages of the new calculus become apparent when the drift of $Y$ is to be computed under some new probability measure $\Qu$ absolutely continuous with respect to $\P$. The need to switch measures comes particularly from mathematical finance as illustrated in Examples~\ref{ex: MEMM} and \ref{E:181124.1}, but it also arises in natural sciences as part of filtering theory \citep[and the references therein]{sarkka.sottinen.08} and in Monte Carlo simulations \citep[Section~5.4.2]{grigoriu.02}. 
In existing approaches a change of measure requires a custom-made formula that even depends on the form in which the density process $M$ of 
$\sfrac{\d \Qu}{\d \P}$ is supplied. 
If $M$ is written as a stochastic exponential we need one formula, if it appears as an ordinary exponential we need another formula. These formulae convey little intuition and are consequently hard to memorize. 
In the new calculus there is no need to refer to a formula: we simply notice that by Girsanov's theorem the $\Qu$--drift of $V$ equals the 
$\P$--drift of $V+[V,\Log(M)]$. Since it is easy to write down the representation of $V+[V,\Log(M)]$, the Girsanov computation comes at virtually no extra cost.

Somewhat surprisingly, the simplified calculus implies that one can perform classical It\^o calculus on continuous processes by tracing the behaviour of a hypothetical pure-jump finite variation process. While this observation may seem paradoxical at first sight, we believe the emphasis on jumps makes the simplified stochastic calculus less intellectually taxing than classical approaches firmly rooted in Brownian motion.\medskip

\setstretch{1.1}

\setstretch{1.2}

\appendix
\section{Notation and details about the representations} \label{A:proofs}
In this appendix we provide the setup of this paper and the proofs of the statements in Section~\ref{sect: intro}.
Unless specified otherwise, $d$, $m$, and $n$ are positive integers. The underlying filtered probability space is denoted by  
$(\Omega, \sigalg{F}, \filt{F}, \P)$.  Complex integral of a locally bounded $\Cx^n$--valued process $\zeta=\zeta'+i\zeta''$ 
with respect to a $\Cx^n$--valued semimartingale $X=X'+iX''$ is the $\Cx$--valued semimartingale
\begin{align*}
\int_0^\cdot \zeta_t \d X_t &= \int_0^\cdot \zeta'_t \d X'_t -  \int_0^\cdot \zeta''_t  \d X''_t
+i\left(\int_0^\cdot \zeta''_t \, \d X'_t +  \int_0^\cdot \zeta'_t \d X''_t\right).
\end{align*}

We write  $\Cinf^n =\Cx^n \cup \{\NaN\}$ for some `non-number' $\NaN \notin \bigcup_{n\in\N}\Cx^n$ and  
$\widebar{\Omega}^n_\Cx = \Omega \times [0,\infty) \times \Cinf^n$. The symbols $\Rinf$ and $\widebar{\Omega}^n_\R$ have an analogous meaning.
For a predictable function $\xi$ we shall always assume that $\xi(\NaN) = \NaN$. 
If  $\psi: \widebar{\Omega}^n_\Cx \rightarrow \Cinf^m$, with $m \in \N$, denotes another predictable function we shall write $\psi \circ \xi$ or $\psi(\xi)$ to denote the predictable function $(\omega, t, x) \mapsto \psi(\omega, t, \xi(\omega, t, x))$ and likewise with $\Cx$ replaced by $\R$.

Provided they exist, we write $D \xi$ and $D^2 \xi$ for the complex derivatives of 
$\xi:\widebar{\Omega}^d_\Cx \rightarrow \widebar{\Cx}^n$, resp., the real derivatives of $\xi:\widebar{\Omega}^d_\R \rightarrow \widebar{\R}^n$. Note that $D \xi$ has dimension $n \times d$ and $D^2 \xi$ has dimension $n \times d \times d$.
\setstretch{1.2}

\begin{definition}[Two subclasses of universal representing functions]  \label{D:core}
Let $\mathfrak I^{d,n}_{0\Cx}$ denote the set of all predictable functions $\xi: \widebar{\Omega}^d_{\Cx} \rightarrow \Cinf^n$ such that the following properties hold:
\begin{enumerate}[label={\rm(\arabic{*})}, ref={\rm(\arabic{*})}]
			\item\label{I0:i}   $\xi(\omega, t, 0) = 0$ for all $(\omega, t) \in \Omega \times [0,\infty)$.
			\item\label{I0:ii} There is a predictable process $R$ locally bounded away from zero, i.e., with strictly po\-si\-ti\-ve running infimum $R^*$, such that 
			\begin{enumerate}[label={\rm(\alph{*})}, ref={\rm(\alph{*})}]
				\item\label{I0:ii:a} $x \mapsto \xi(\omega, t, x)$ is analytic on $|x|\leq R(\omega,t)$, for all $(\omega, t) \in \Omega \times [0,\infty)$;
				\item\label{I0:ii:b} $\sup_{|x|\leq R} \bigs|D^2 \xi(x)\bigs|$ is locally bounded.
			\end{enumerate}
			\item\label{I0:iii} $D \xi(0)$ is locally bounded.
	\end{enumerate}
We write $\I_{0\Cx}=\bigcup_{k,r\in\N}\I^{k,r}_{0\Cx}$. The subclass $\I_{0\R}$ of predictable functions $\xi:\widebar{\Omega}^d_\R \rightarrow \widebar{\R}^n$ is defined by replacing (a) with the requirement (a') $x \mapsto \xi(\omega, t, x)$ is twice differentiable for $|x|\leq R(\omega,t)$, for all $(\omega, t) \in \Omega \times [0,\infty)$. 
\qed
\end{definition}

Let us provide some context to the previous definition. Most of the time, we are interested  in `real-valued' transformations  $\xi$, which map an $\R^d$--valued semimartingale to an $\R^n$--valued semimartingale. The core class $\I_{0\R}$ is perfectly suited for this purpose.  
The \'Emery formula, as stated in \eqref{eq:190924.5}, works also for complex-valued $\xi$ if we interpret $D\xi(0)$ and $D^2\xi(0)$ as complex derivatives. Such generalization of the theory from $\R$ to $\Cx$, albeit limited by forcing $\xi$ to be analytic at $0$, is helpful when computing characteristic functions, for example. This leads to the definition of  $\I_{0\Cx}$, which is now a proper subclass within a larger core class $\I_0$ of complex-valued universal representing functions nesting also $\I_{0\R}$.

We do not define $\I_0$ itself in this paper but it can be shown that $\I^{d,n}_0$ has a one-to-one correspondence with $\I^{2d,2n}_{0\R}$.  A more general \'Emery formula is available for $\I_0$ but will not be needed in this paper. We  hence refer the interested reader to \citet{crII} for more details. All computations in this paper can be performed either within $\I_{0\R}$ or within $\I_{0\Cx}$ and the two classes are never used jointly.  The arguments for the two cases are often identical but should be read and understood as two separate arguments because the meaning of $D\xi$ is different in the two cases. The proofs for each of the two classes are self-contained.

Let us now briefly show that all terms in  \eqref{eq:190924.5} are well-defined.
\begin{lemma}
\label{L:well-defined}
If $\xi \in \I_{0\R}\cup\I_{0\Cx}$ then the integrals $\int_0^\cdot D\xi_t(0)\d X_t$ and 
$$\int_0^\cdot \sum_{i,j = 1}^d D^2_{ij} \xi_t(0)\, \d\big[X^{(i)},X^{(j)}\big]^c_t$$ 
are well-defined. If, additionally, $\xi$ is compatible with $X$, then \eqref{eq:191003.1} holds.
\end{lemma}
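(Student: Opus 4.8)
The plan is to dispatch the two integrals with standard tools from stochastic calculus, and then to establish \eqref{eq:191003.1} by splitting the jumps of $X$ into a ``small'' family, controlled by a second-order Taylor estimate, and a ``large'' family, which is finite along a suitable localizing sequence.

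First I would observe that $D\xi(0)$ and $D^2\xi(0)$ are predictable, being pointwise limits of difference quotients of the predictable function $\xi$. By Definition~\ref{D:core}\ref{I0:iii} the process $D\xi(0)$ is locally bounded, so $\int_0^\cdot D\xi_t(0)\,\d X_t$ is a well-defined stochastic integral against the semimartingale $X$. Evaluating the bound in condition~\ref{I0:ii:b} at the point $x=0$ (legitimate since $R>0$) shows that each $D^2_{ij}\xi(0)$ is locally bounded as well; since each $[X^{(i)},X^{(j)}]^c$ is continuous and of locally finite variation (non-decreasing when $i=j$, of finite variation in general by polarization), the integral $\int_0^\cdot\sum_{i,j}D^2_{ij}\xi_t(0)\,\d[X^{(i)},X^{(j)}]^c_t$ exists pathwise as a Lebesgue--Stieltjes integral. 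This settles the first assertion; the argument is the same for $\I_{0\R}$ and $\I_{0\Cx}$, with $D$ and $D^2$ read as real, respectively complex, derivatives.

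For \eqref{eq:191003.1}, assume in addition that $\xi$ is compatible with $X$ and put $C_t=\sup_{|x|\le R_t}|D^2\xi_t(x)|$, locally bounded by \ref{I0:ii:b}. I would fix one localizing sequence $(\tau_n)$ with $\tau_n\uparrow\infty$ along which $C$, $|D\xi(0)|$ and $1/R$ are all bounded, say by $c_n$ on $[0,\tau_n]$ (intersect the three individual sequences). Fix $n$ and work pathwise on $[0,\tau_n]$: split the countably many jump times into $A_n=\{t\le\tau_n:|\Delta X_t|\le R_t\}$ and $B_n=\{t\le\tau_n:|\Delta X_t|>R_t\}$. For $t\in A_n$, the second-order Taylor estimate along the segment $[0,\Delta X_t]$ for the twice-differentiable (resp.\ analytic) map $x\mapsto\xi_t(x)$ on $\{|x|\le R_t\}$, together with $\xi_t(0)=0$, gives $|\xi_t(\Delta X_t)-D\xi_t(0)\Delta X_t|\le\tfrac12 C_t|\Delta X_t|^2\le\tfrac{c_n}{2}|\Delta X_t|^2$, and summing over $A_n$ with $\sum_{t\le\tau_n}|\Delta X_t|^2\le\sum_i[X^{(i)},X^{(i)}]_{\tau_n}<\infty$ bounds this part. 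For $t\in B_n$ one has $|\Delta X_t|>R_t\ge 1/c_n$, so finiteness of $\sum_{t\le\tau_n}|\Delta X_t|^2$ forces $B_n$ to be finite; on it $\xi_t(\Delta X_t)$ is finite-valued by compatibility and $|D\xi_t(0)\Delta X_t|\le c_n|\Delta X_t|<\infty$, so the sum over $B_n$ is a finite sum of finite terms. Hence $\sum_{0<t\le\tau_n}|\xi_t(\Delta X_t)-D\xi_t(0)\Delta X_t|<\infty$ for each $n$, and $n\to\infty$ yields \eqref{eq:191003.1}.

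None of this is hard; the points requiring a little care are the bookkeeping of the several localizing sequences and the validity of the quadratic Taylor remainder bound in both settings (Lagrange form of Taylor's theorem, restricted to the line segment $[0,\Delta X_t]$, in the $\I_{0\R}$ case; analyticity on the closed ball $\{|x|\le R_t\}$ in the $\I_{0\Cx}$ case). If the ``strictly positive running infimum'' of $R$ is read pathwise, the localization of $1/R$ becomes superfluous: one may instead use, for each $\omega$, a positive lower bound for $R$ on the compact interval in question.
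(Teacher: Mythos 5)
Your proposal is correct and follows essentially the same route as the paper: local boundedness of $D\xi(0)$ and $D^2\xi(0)$ handles the two integrals, and the jump sum is controlled by localizing so that a quadratic Taylor bound $|\xi_t(x)-D\xi_t(0)x|\le K|x|^2$ holds for $|x|$ below the radius $R$, splitting into small jumps (summable against $\sum_{0<t\le\cdot}|\Delta X_t|^2\le\operatorname{trace}[X,X]<\infty$) and finitely many large jumps (finite-valued by compatibility). The only differences are cosmetic: you spell out the Taylor estimate and the quadratic-variation bound that the paper leaves implicit, and you justify the second integral pathwise rather than by citing Jacod--Shiryaev Theorem~I.4.31.
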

\begin{proof} 
Because $D\xi(0)$ and $D^2\xi(0)$ are locally bounded, the two integrals are well-defined by \citet[Theorem~I.4.31]{js.03}. By assumption, $(\tau_n)_{n \in \N}$ given by $\tau_n=\inf\{t:R^*_t\le \sfrac{1}{n}\}$ is a localizing sequence. Next, let $(\rho_n)_{n \in \N}$ be the localizing sequence from Definition~\ref{D:core}\ref{I0:ii}\ref{I0:ii:b}. Then $(\tau_n\wedge\rho_n)_{n \in \N}$ is again a localizing sequence such that, after localization, $|\xi(x)-D\xi(0)x|\le K|x|^2$ for all $|x|\le \delta$ for some constants $K>0$ and $\delta>0$. This yields, after localization,
\begin{align*} \sum_{0<t \leq \cdot}& |\xi_t(\Delta X_t) - D \xi_t(0) \Delta X_t|
=\!\!\!\!\sum_{\stackrel{0<t \leq \cdot}{|\Delta X_t|\le \delta}}\!\!\!\! |\xi_t(\Delta X_t) - D \xi_t(0) \Delta X_t|
+\!\!\!\! \sum_{\stackrel{0<t \leq \cdot}{|\Delta X_t| > \delta}}\!\!\!\! |\xi_t(\Delta X_t) - D \xi_t(0) \Delta X_t|
<\infty
\end{align*}
as the last sum has only finitely many summands.
\end{proof}

We are now ready to state and prove the main properties of semimartingale representations.
\begin{proposition}[Representation of stochastic integrals]\label{P:integral}
Let $\zeta$ be a locally bounded predictable $\R^{n\times d}$--valued (resp., $\Cx^{n\times d}$--valued) process. Then the predictable function $\xi= \zeta\,\id$ belongs to $\I_{0\R}^{d,n}$~(resp., $\I_{0\Cx}^{d,n}$) and for any $\R^d$--valued (resp., $\Cx^d$--valued) semimartingale $X$ one has
\begin{equation}\label{eq:191006}
\int_0^\cdot \zeta_t \d X_t  = (\zeta\, \id) \circ X.
\end{equation}
\end{proposition}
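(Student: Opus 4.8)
The plan is to proceed in two steps. First I would check that the predictable function $\xi = \zeta\,\id$ (extended by the standing convention $\xi(\NaN)=\NaN$) belongs to $\I_{0\R}^{d,n}$ (resp.\ $\I_{0\Cx}^{d,n}$) by verifying the three requirements of Definition~\ref{D:core} directly. Then I would deduce the identity~\eqref{eq:191006} by specialising the \'Emery formula~\eqref{eq:190924.5} to this $\xi$, using that an affine function of $x$ has vanishing second derivative.

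For membership, Definition~\ref{D:core}\ref{I0:i} holds because $\xi(\omega,t,0)=\zeta(\omega,t)\cdot 0=0$. For Definition~\ref{D:core}\ref{I0:ii} take the constant process $R\equiv 1$, whose running infimum is $1>0$: for every $(\omega,t)$ the map $x\mapsto\zeta(\omega,t)x$ is entire (resp.\ infinitely differentiable), so Definition~\ref{D:core}\ref{I0:ii}\ref{I0:ii:a} (resp.\ its analogous real-differentiability condition) holds; and since $D^2\xi\equiv 0$ we have $\sup_{|x|\le R}\abs{D^2\xi(x)}=0$, trivially locally bounded, so Definition~\ref{D:core}\ref{I0:ii}\ref{I0:ii:b} holds. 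Finally, Definition~\ref{D:core}\ref{I0:iii} is exactly the standing hypothesis that $D\xi(0)=\zeta$ be locally bounded. Hence $\xi\in\I_{0\R}^{d,n}$ (resp.\ $\I_{0\Cx}^{d,n}$). Moreover $\xi$ is compatible with any semimartingale $X$, since $\xi(\Delta X_t)=\zeta_t\Delta X_t$ is finite-valued, $\P$--a.s.

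It then remains to identify the two sides of~\eqref{eq:191006}. The left-hand side is the ordinary (real, resp.\ complex) stochastic integral of a locally bounded predictable integrand, hence well defined (see \citet[Theorem~I.4.31]{js.03}; in the complex case one splits into real and imaginary parts as in the construction of the complex integral above); by the previous paragraph together with Lemma~\ref{L:well-defined}, the right-hand side $(\zeta\,\id)\circ X$ is well defined as well and is given termwise by~\eqref{eq:190924.5}. Substituting $D\xi_t(0)=\zeta_t$, $D^2_{ij}\xi_t(0)=0$ and $\xi_t(\Delta X_t)=\zeta_t\Delta X_t=D\xi_t(0)\Delta X_t$ into~\eqref{eq:190924.5}, both the continuous quadratic-variation term and the jump-correction term vanish, leaving $\xi_t(\d X_t)=\zeta_t\,\d X_t$; integrating from $0$ yields~\eqref{eq:191006}.

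There is no real obstacle here: the proposition is essentially the base case on which the whole calculus rests, and the only place that calls for even a moment's care is Definition~\ref{D:core}\ref{I0:ii}, which is dispatched at once by taking the auxiliary radius process $R$ constant --- legitimate precisely because an affine map is entire with vanishing Hessian.
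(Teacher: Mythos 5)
Your proof is correct and takes essentially the same route as the paper's: both identify $D\xi=\zeta$ and $D^2\xi=0$, conclude membership in $\I_{0\R}^{d,n}$ (resp.\ $\I_{0\Cx}^{d,n}$) and compatibility with any semimartingale, and then read off \eqref{eq:191006} from the \'Emery formula, whose second-order and jump-correction terms vanish. You merely spell out the verification of Definition~\ref{D:core} (with the constant radius $R\equiv 1$) that the paper leaves implicit.
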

\begin{proof}
We start with the complex-valued case. As $D\xi=\zeta$ and $D^2\xi=0$
, we have that $\xi$ belongs to $\I_{0\Cx}$ and is compatible with any  $\Cx^d$--valued semimartingale $X$. The \'Emery formula \eqref{eq:190924.5} now yields \eqref{eq:191006}. The real-valued proof proceeds analogously.
\end{proof}

\begin{proposition}[Representation of smooth transformations]\label{P:Ito}
Let $\U \subset \R^d$ (resp., $\U \subset \Cx^d$) be an open set such that $X_-, X \in \U$, let $f: \U \rightarrow \R^n$ be a twice continuously differentiable function (resp., let $f: \U \rightarrow \Cx^n$ be an analytic function), and let 
\begin{align*}
	 \xi^{f,X}(x) = 
		\begin{cases}
			f\left(X_- + x\right) - f\left(X_-\right), &\quad X_- + x \in \U\\
			\NaN, &\quad X_- + x \notin \U
		\end{cases},
		\quad x \in \R^d\ \ (\text{resp., } x\in\Cx^d).
\end{align*}
Then $\xi^{f,X}\in\I_{0\R}^{d,n}$ (resp. $\xi^{f,X}\in\I_{0\Cx}^{d,n}$) is compatible with $X$ and 
$$f(X) = f(X_0)+\xi^{f,X} \circ X.$$
\end{proposition}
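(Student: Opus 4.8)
The plan is to verify the three defining properties of $\I_{0\R}^{d,n}$ (resp.\ $\I_{0\Cx}^{d,n}$) from Definition~\ref{D:core} for the predictable function $\xi^{f,X}$, check compatibility with $X$, and then invoke the \'Emery formula~\eqref{eq:190924.5} to read off the representation. The identity $\xi^{f,X}(0) = f(X_-) - f(X_-) = 0$ gives property~\ref{I0:i} immediately, so the only real work is in property~\ref{I0:ii}, i.e.\ producing the locally bounded predictable radius $R$.

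First I would construct $R$. Since $\U$ is open and $X_-$ is a c\`adl\`ag (hence locally bounded) predictable process taking values in $\U$, the distance $\operatorname{dist}(X_-, \complement{\U})$ is predictable and strictly positive; set $R$ to be, say, half of this distance capped at $1$ (or at any locally bounded predictable level that keeps the ball $|x| \leq R(\omega,t)$ inside $\U - X_{-}(\omega,t)$). By construction $R$ is locally bounded away from zero on a suitable localizing sequence. On $|x| \leq R(\omega,t)$ we have $X_- + x \in \U$, so $\xi^{f,X}(\omega,t,x) = f(X_{-}(\omega,t)+x) - f(X_{-}(\omega,t))$, which is twice continuously differentiable in $x$ (resp.\ analytic, in the complex case) because $f$ is; this is property~\ref{I0:ii:a} (resp.\ its real variant~(a')). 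For~\ref{I0:ii:b}, note $D^2 \xi^{f,X}(\omega,t,x) = D^2 f(X_{-}(\omega,t)+x)$; since $X_-$ stays in a locally bounded region of $\U$ and $R \leq 1$, the values $X_{-}(\omega,t)+x$ with $|x|\leq R(\omega,t)$ lie in a locally bounded set whose closure is a compact subset of $\U$ (after localization), on which the continuous function $D^2 f$ is bounded; hence $\sup_{|x|\leq R}|D^2\xi^{f,X}(x)|$ is locally bounded. Similarly $D\xi^{f,X}(0) = Df(X_-)$ is locally bounded, giving property~\ref{I0:iii}. Thus $\xi^{f,X} \in \I_{0\R}^{d,n}$ (resp.\ $\I_{0\Cx}^{d,n}$).

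Next, compatibility: since $X \in \U$ by hypothesis, at every jump time $X_{-} + \Delta X = X \in \U$, so $\xi^{f,X}(\Delta X) = f(X) - f(X_{-})$ is finite-valued, $\P$--a.s.; hence $\xi^{f,X}$ is compatible with $X$. By Lemma~\ref{L:well-defined}, all three terms in the \'Emery formula~\eqref{eq:190924.5} applied to $\xi = \xi^{f,X}$ are well-defined, so $\xi^{f,X}\circ X$ makes sense. It then remains to identify $f(X) - f(X_0)$ with $\xi^{f,X}\circ X$. This is precisely the classical It\^o formula: writing out~\eqref{eq:190924.5} with $D\xi^{f,X}(0) = Df(X_-)$ and $D^2\xi^{f,X}(0) = D^2 f(X_-)$ gives
\[
\xi^{f,X}\circ X = \int_0^\cdot Df(X_{t-})\,\d X_t + \frac12\sum_{i,j=1}^d\int_0^\cdot D^2_{ij} f(X_{t-})\,\d[X^{(i)},X^{(j)}]^c_t + \sum_{0<t\leq\cdot}\bigl(f(X_t)-f(X_{t-}) - Df(X_{t-})\Delta X_t\bigr),
\]
which is exactly the standard It\^o formula for $f(X) - f(X_0)$ (\citealp[Theorem~I.4.57]{js.03} in the real case; the complex case follows by splitting into real and imaginary parts, using analyticity to identify the complex derivatives with the corresponding real Jacobian/Hessian blocks). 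Hence $f(X) = f(X_0) + \xi^{f,X}\circ X$.

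\textbf{Main obstacle.} The one place that needs genuine care is the construction of the predictable radius $R$ and the verification that the relevant sets stay inside $\U$ after localization --- in particular, choosing $R$ so that it is simultaneously predictable, locally bounded away from $0$, and small enough that $\{X_{-}(\omega,t)+x : |x|\leq R(\omega,t)\}$ is contained in a locally bounded subset of $\U$ whose closure is compactly contained in $\U$; the boundedness of $D^2 f$ (needed for~\ref{I0:ii:b}) is then automatic by continuity of $D^2 f$ on that compact set. The identification step is, by contrast, a routine appeal to the classical It\^o formula once everything is known to be well-defined.
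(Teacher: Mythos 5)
Your proposal is correct and follows essentially the same route as the paper: the same choice of $R$ (half the distance from $X_-$ to $\boundary\U$, capped at $1$), localization to make $D^2 f$ bounded on a compact neighbourhood, compatibility from $X\in\U$, identification with the It\^o--Meyer formula \citep[I.4.57]{js.03}, and reduction of the complex case to the real one. The only detail worth making explicit is why $R$ is locally bounded away from zero: $R$ inherits left-continuity from $X_-$, so its running infimum is strictly positive, which is exactly the observation the paper uses.
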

\begin{proof}
The first part of the proof is identical for both cases. Note that 
$D \xi^{f,X}(0) = D f(X_-)$ and $D^2 \xi^{f,X}(0) = D^2 f(X_-)$. As both $D f(X_-)$ and $D^2 f(X_-)$ are finite-valued predictable processes, they are locally bounded by \citet[Proposition~3.2]{larsson.ruf.20}. Next, denote by $R \in (0, 1]$ the minimum of 1 and  half of the distance from $X_-$ to the boundary of $\mathcal{U}$ and by $R^*$ its running infimum. The left-continuity of $R$ now yields $R^*>0$. Next, observe that 
\[
	\tau_n=\inf\left\{t \geq 0:R^*_{t}< \frac{1}{n}\right\} \wedge \inf\{t\geq 0: | X_{t-}| > n\}, \qquad n\in\N,
\]
is a localizing sequence of stopping times that makes  
$ \sup_{|x|\leq R} \bigs|D^2 \xi(x)\bigs|$
locally bounded, yielding $\xi^{f,X}\in\I_{0\R}$ (resp., $\I_{0\Cx}$).  As $\xi^{f,X}(\Delta X) = f(X)-f(X_-)$, 
we have $\xi^{f,X}$ is compatible with $X$. 

For $\xi\in\I_{0\R}$, Lemma~\ref{L:well-defined} and  the \'Emery formula \eqref{eq:190924.5} now yield that $f(X_0)+\xi^{f,X} \circ X$ is the  It\^o-Meyer change of variables formula \citep[I.4.57]{js.03} and hence equal to $f(X)$. For $\xi\in\I_{0\Cx}$ the result follows by identifying $\Cx^d$ with $\R^{2d}$ and using the real-valued statement we have just proved.
\end{proof}

\begin{proposition}[Composition of universal representing functions]\label{P:composition}
The space $\I_{0\R}$ is closed under dimensionally correct composition, i.e., if $\xi \in\I_{0\R}^{d,n}$ and $\psi\in\I_{0\R}^{n,m}$ then $\psi\circ\xi\in\I_{0\R}^{d,m}$. An analogous statement holds for $\I_{0\Cx}$.
\end{proposition}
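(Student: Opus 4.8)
The plan is to verify each of the three defining properties of $\I_{0\R}^{d,m}$ (Definition~\ref{D:core}) for the composite function $\psi\circ\xi$, using the corresponding properties of $\xi\in\I_{0\R}^{d,n}$ and $\psi\in\I_{0\R}^{n,m}$ together with the chain rule. Property~\ref{I0:i} is immediate: $\xi(0)=0$ gives $(\psi\circ\xi)(\omega,t,0)=\psi(\omega,t,0)=0$. For property~\ref{I0:iii}, the chain rule yields $D(\psi\circ\xi)(0) = D\psi(0)\,D\xi(0)$, and since both $D\psi(0)$ and $D\xi(0)$ are locally bounded by assumption, so is their product (a product of two locally bounded predictable processes is locally bounded, by taking the minimum of the two localizing sequences).

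The substance of the argument is property~\ref{I0:ii}, the existence of a suitable radius process. Let $R^\xi$ be the radius witnessing \ref{I0:ii} for $\xi$ and $R^\psi$ the one for $\psi$. I would first use the fact that $x\mapsto\xi(\omega,t,x)$ is $\C22$ (in the real case; analytic in the complex case) on $\{|x|\le R^\xi(\omega,t)\}$ with $\xi(0)=0$ and $\sup_{|x|\le R^\xi}|D^2\xi(x)|$ locally bounded to deduce that, after passing to a localizing sequence, $\xi$ is locally Lipschitz near the origin with a locally bounded Lipschitz constant $L$; concretely, $|\xi(x)|\le (|D\xi(0)|+\tfrac12 L R^\xi)|x|$ for $|x|\le R^\xi$. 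Choosing a predictable process $\rho$, locally bounded away from zero, small enough that $(|D\xi(0)|+\tfrac12 L R^\xi)\rho \le R^\psi$ and $\rho\le R^\xi$ on the relevant time set, one guarantees that $\xi$ maps $\{|x|\le\rho(\omega,t)\}$ into $\{|y|\le R^\psi(\omega,t)\}$, so that $\psi\circ\xi$ is well-defined and $\C22$ (resp.\ analytic) on $\{|x|\le\rho\}$. This $\rho$ is then the candidate radius process for $\psi\circ\xi$.

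It remains to check the local boundedness of $\sup_{|x|\le\rho}|D^2(\psi\circ\xi)(x)|$. The chain rule gives, schematically, $D^2(\psi\circ\xi)(x) = D\psi(\xi(x))\,D^2\xi(x) + D^2\psi(\xi(x))\,[D\xi(x)]^{\otimes 2}$, so one must bound each factor uniformly over $|x|\le\rho$. The terms $\sup_{|x|\le\rho}|D^2\xi(x)|$ and $\sup_{|y|\le R^\psi}|D^2\psi(y)|$ are locally bounded by hypothesis; $\sup_{|x|\le\rho}|D\xi(x)|$ is controlled by $|D\xi(0)|$ plus $\rho\sup_{|x|\le\rho}|D^2\xi(x)|$ via the mean value inequality, hence locally bounded; and $\sup_{|y|\le R^\psi}|D\psi(y)|$ likewise. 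Taking a common localizing sequence (intersecting the finitely many sequences that make each of these suprema bounded, together with $\tau_n=\inf\{t: (R^\xi_t)^*\wedge(R^\psi_t)^*\wedge\rho^*_t < 1/n\}$) makes $\sup_{|x|\le\rho}|D^2(\psi\circ\xi)(x)|$ locally bounded, establishing \ref{I0:ii}\ref{I0:ii:b}. The complex case is verbatim the same with ``$\C22$'' replaced by ``analytic'' and the mean value inequality replaced by Cauchy's estimates on a slightly smaller disc.

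The main obstacle I anticipate is the bookkeeping in the construction of $\rho$: one needs $\rho$ predictable, locally bounded away from zero, \emph{and} small enough — relative to the locally bounded but possibly unbounded processes $|D\xi(0)|$, $L$, $R^\xi$ — that the image of the $\rho$-ball lands in the $R^\psi$-ball. This is handled by a localization: on each stochastic interval where all the relevant processes are bounded by a constant, one can choose $\rho$ to be a small constant, and then patch these together into a global predictable process. No single step is deep, but the argument must be written carefully so that every ``locally bounded'' claim is backed by an explicit localizing sequence, and the final localizing sequence is the intersection of all of them.
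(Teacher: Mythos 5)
Your proposal is correct and follows essentially the same route as the paper's proof: localize so that $D\xi(0)$, $D\psi(0)$ and the second-derivative suprema become bounded, use $\xi(0)=0$ together with the resulting Lipschitz bound to shrink the radius so that $\xi$ maps its ball into the ball where $\psi$ is controlled, apply the chain rule for $D(\psi\circ\xi)(0)$ and $D^2(\psi\circ\xi)(x)$, and patch the piecewise-constant radii along the localizing sequence into a predictable process locally bounded away from zero. The only differences are presentational — you spell out the Lipschitz estimate and the verification of properties \ref{I0:i} and \ref{I0:iii} explicitly, which the paper leaves implicit.
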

\begin{proof}
The proof is identical for both cases. By localization we may assume that $D \psi(0)$ is bounded and that there exists a constant $\delta_{\psi }>0$
such that $\sup_{|y| \leq \delta_\psi} D^2 \psi(y)$ and consequently also $\sup_{|y| \leq \delta_\psi}  D \psi(y)$ are bounded.
By the same construction, we may assume that there exists a constant $\delta_{\xi }>0$
such that $\sup_{|x| \leq \delta_\xi}  D^2 \xi(x)$ and $\sup_{|x| \leq \delta_\xi}  D \xi(x)$ are bounded. Moreover, there exists also $\delta_{\psi\circ\xi} \in (0, \delta_\xi)$ such that $\sup_{|x| \leq \delta_{\psi\circ\xi}}  \xi(x) < \delta_\psi$.

By direct computation, for all $|x| \leq \delta_{\psi\circ\xi}$ we now have 
\begin{align*} 
D (\psi \circ \xi) (0) &= \sum_{k=1}^n D_k \psi(0) D \xi^{(k)}(0);\\
D^2(\psi \circ \xi) (x) &= \sum_{k,l = 1}^n D^2_{kl} \psi(\xi(x))   D \xi^{(k)}(x)^\top D\xi^{(l)}(x)  
		+ \sum_{k = 1}^n D_k\psi(\xi(x)) D^2\xi^{(k)}(x).
\end{align*}
This yields a positive non-increasing sequence $\left(\delta_{\psi\circ\xi}^{(n)}\right)_{n \in \N}$ and a localizing sequence $(\tau_n)_{n \in \N}$ of stopping times  such that $D (\psi \circ \xi) (0)$ and 
$$ \sup_{|x| \leq \delta^{(n)}_{\psi\circ\xi}}D^2(\psi \circ \xi) (x)$$
are bounded on the stochastic interval $\lc\tau_{n-1},\tau_n\lc$ for each $n\in\N$. The desired process $R_{\psi\circ\xi}$ is obtained by setting $\sum_{n\in\N}\delta_{\psi\circ\xi}^{(n)}\indicator{\lc\tau_{n-1},\tau_n\lc}$.
\end{proof}

\setstretch{1.2}
\section{Truncation and predictable compensators}\label{B}
In this appendix we complement the observations in Subsection~\ref{sect: Emery}. We begin by formally introducing truncation functions. 
\begin{definition}[Truncation function for $X$]\label{D:truncation}
We say that a predictable function $h$ is a truncation function for a 
semimartingale $X$ if $h$ is time-constant and deterministic, compatible with $X$, $ \sum_{0<t \leq \cdot} |\Delta X_t - h(\Delta X_t)|< \infty $, 
and if
$$ X[h] = X - \sum_{0<t \leq \cdot} (\Delta X_t - h(\Delta X_t))$$ 
is a special semimartingale, i.e., if $X[h]$ can be decomposed into the sum of a local martingale and a predictable process of finite variation. \qed
\end{definition}
\begin{proposition}[Universal truncation functions]\label{P:universal truncation}
If a bounded time-constant deterministic function $h$  equals identity on an open neighbourhood of 0 then it is a truncation function for any compatible semimartingale $X$. Furthermore, one then has $h\in\I_{0\R}\cup\I_{0\Cx}$ and 
$$ X[h] = X_0 + h\circ X.$$
\end{proposition}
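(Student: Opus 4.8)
The plan is to verify the two defining properties directly from the hypotheses. Fix $\varepsilon>0$ such that $h$ equals the identity on the ball $\{x:|x|<\varepsilon\}$, and put $c=\sup_x|h(x)|<\infty$.

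First I would show $h\in\I_{0\R}\cap\I_{0\Cx}$ by checking Definition~\ref{D:core} with the constant predictable process $R\equiv\varepsilon$, which is trivially locally bounded away from zero. Property~\ref{I0:i} is immediate since $h(0)=0$. On $\{|x|\le\varepsilon\}$ the map $h$ agrees with the identity, which is linear, hence analytic and twice differentiable, so \ref{I0:ii:a} and its real-valued analogue~(a') both hold, while $\sup_{|x|\le\varepsilon}\bigl|D^2h(x)\bigr|=0$ is locally bounded, giving \ref{I0:ii:b}; finally $Dh(0)$ is the identity matrix, a constant process, so \ref{I0:iii} holds. Hence $h$ lies in both classes, in particular in $\I_{0\R}\cup\I_{0\Cx}$.

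Next I would verify that $h$ is a truncation function for any compatible semimartingale $X$. It is time-constant, deterministic, and compatible with $X$ by assumption. For the summability requirement, the key observation is that $\Delta X_t-h(\Delta X_t)=0$ whenever $|\Delta X_t|\le\varepsilon$; since $X$ has only finitely many jumps of absolute size exceeding $\varepsilon$ on each finite interval and each surviving term obeys $|\Delta X_t-h(\Delta X_t)|\le|\Delta X_t|+c<\infty$, the sum $\sum_{0<t\le\cdot}|\Delta X_t-h(\Delta X_t)|$ is $\P$--a.s.\ finite. Consequently $A:=\sum_{0<t\le\cdot}(\Delta X_t-h(\Delta X_t))$ is a well-defined adapted, right-continuous process of finite variation, so $X[h]:=X-A$ is a semimartingale with $\Delta X[h]_t=h(\Delta X_t)$, whence $|\Delta X[h]|\le c$; a semimartingale with uniformly bounded jumps is special (see, e.g., \citealp[I.4.24]{js.03}), which supplies the last requirement of Definition~\ref{D:truncation}.

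Finally I would establish the representation. Having shown $h\in\I_{0\R}\cup\I_{0\Cx}$ and knowing that $h$ is compatible with $X$, Lemma~\ref{L:well-defined} makes every term of the \'Emery formula~\eqref{eq:190924.5} well-defined for $\xi=h$; substituting $Dh(0)=I$ and $D^2h(0)=0$ collapses \eqref{eq:190924.5} to $h_t(\d X_t)=\d X_t+\bigl(h(\Delta X_t)-\Delta X_t\bigr)$, and integrating from $0$ gives $h\circ X=(X-X_0)-A=X[h]-X_0$, i.e.\ $X[h]=X_0+h\circ X$. Everything in this argument is routine bookkeeping driven by the two hypotheses: "$h$ equals the identity near $0$" reduces every relevant sum to finitely many nonzero terms, and "$h$ bounded" controls the jumps of $X[h]$. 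I expect the only non-mechanical point to be the appeal to the classical fact that a semimartingale with bounded jumps is special; I would take care to quote the precise reference and to confirm that the form of the \'Emery formula used in the last step is exactly the one whose summability Lemma~\ref{L:well-defined} guarantees.
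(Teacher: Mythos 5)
Your proof is correct and follows essentially the same route as the paper: verify membership in the universal class, note $\Delta X[h]=h(\Delta X)$ is bounded so $X[h]$ is special by Jacod--Shiryaev I.4.24, and collapse the \'Emery formula using $Dh(0)=I$, $D^2h(0)=0$. The only (harmless) difference is that you establish the summability of $\sum|\Delta X_t-h(\Delta X_t)|$ by the direct observation that $h=\mathrm{id}$ near $0$ leaves only finitely many nonzero terms, whereas the paper invokes Lemma~\ref{L:well-defined}; both arguments are valid.
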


\begin{proof}
Clearly $h\in\I_{0\R}\cup\I_{0\Cx}$. Lemma~\ref{L:well-defined} with $\xi=h$ now yields 
$ \sum_{0<t \leq \cdot} |h(\Delta X_t) - \Delta X_t |< \infty $.
Next, observe that $\Delta X[h]_t = h(\Delta X_t)$ is bounded, therefore $X[h]$ is special by \citet[I.4.24]{js.03}. Finally, the \'Emery formula \eqref{eq:190924.5}  yields
$$ X_0 + h\circ X = X  + \sum_{0<t \leq \cdot} (h(\Delta X_t) - \Delta X_t ) = X[h],$$
which completes the proof.
\end{proof}

\begin{proposition}[\'Emery formula with truncation]\label{P:Emery truncation}
Assume $\xi\in\I_{0\R}$ (resp., $\I_{0\Cx}$) is compatible with $X$ and let $h$ be a truncation function for $X$. Then 
$$ \sum_{0<t \leq \cdot} \left|\xi_t(\Delta X_t) - D \xi_t(0) h(\Delta X_t)\right|<\infty$$
and 
\begin{equation}\label{eq:191016.1}
\begin{split}
\xi\circ X = \int_0^\cdot D\xi_t(0)\d X[h]_t 
&+ \frac{1}{2}\int_0^\cdot \sum_{i,j = 1}^d D^2_{ij} \xi_t(0)\, \d\bigs[X^{(i)},X^{(j)}\bigs]^c_t \\
								&+ \sum_{0<t \leq \cdot} \left(\xi_t(\Delta X_t) - D \xi_t(0) h(\Delta X_t)\right). 
\end{split}			
\end{equation}
\end{proposition}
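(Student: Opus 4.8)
The plan is to derive \eqref{eq:191016.1} from the original \'Emery formula \eqref{eq:190924.5} together with the universal-truncation identity $X[h] = X_0 + h\circ X$ from Proposition~\ref{P:universal truncation}, by bookkeeping which jump contributions go into the first term and which into the last. First I would invoke Proposition~\ref{P:universal truncation}, which requires $h$ to be a universal truncation function; but here $h$ is merely assumed to be \emph{a} truncation function for $X$ in the sense of Definition~\ref{D:truncation}, not necessarily one equal to the identity near $0$. So the honest route is: apply the \'Emery formula \eqref{eq:190924.5} directly to $\xi$, which is legitimate since $\xi\in\I_{0\R}\cup\I_{0\Cx}$ is compatible with $X$ and Lemma~\ref{L:well-defined} guarantees that all its terms are well-defined; this gives
\[
\xi\circ X = \int_0^\cdot D\xi_t(0)\,\d X_t + \tfrac12\int_0^\cdot \textstyle\sum_{i,j} D^2_{ij}\xi_t(0)\,\d[X^{(i)},X^{(j)}]^c_t + \textstyle\sum_{0<t\le\cdot}\left(\xi_t(\Delta X_t) - D\xi_t(0)\Delta X_t\right).
\]
Then I would rewrite $\int_0^\cdot D\xi_t(0)\,\d X_t = \int_0^\cdot D\xi_t(0)\,\d X[h]_t + \sum_{0<t\le\cdot} D\xi_t(0)(\Delta X_t - h(\Delta X_t))$, using the defining decomposition $X = X[h] + \sum_{0<s\le\cdot}(\Delta X_s - h(\Delta X_s))$ from Definition~\ref{D:truncation} together with the associativity of the stochastic integral (the integral of a locally bounded predictable process against a finite-variation pure-jump process reduces to the pathwise sum). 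Substituting and regrouping the two pure-jump sums yields exactly the right-hand side of \eqref{eq:191016.1}, once one checks the telescoping:
\[
\left(\xi_t(\Delta X_t) - D\xi_t(0)\Delta X_t\right) + D\xi_t(0)\left(\Delta X_t - h(\Delta X_t)\right) = \xi_t(\Delta X_t) - D\xi_t(0)h(\Delta X_t).
\]

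The remaining task is the absolute-summability claim $\sum_{0<t\le\cdot}|\xi_t(\Delta X_t) - D\xi_t(0)h(\Delta X_t)|<\infty$. For this I would split the jump sum as above and apply the triangle inequality: the term $\sum_{0<t\le\cdot}|\xi_t(\Delta X_t) - D\xi_t(0)\Delta X_t|$ is finite by Lemma~\ref{L:well-defined} (this is precisely \eqref{eq:191003.1}), and the term $\sum_{0<t\le\cdot}|D\xi_t(0)||\Delta X_t - h(\Delta X_t)|$ is finite because $D\xi(0)$ is locally bounded (Definition~\ref{D:core}\ref{I0:iii}) and $\sum_{0<t\le\cdot}|\Delta X_t - h(\Delta X_t)|<\infty$ by the definition of a truncation function. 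A small point of care: local boundedness of $D\xi(0)$ controls the sum only up to a localizing sequence, so strictly one should argue on each stochastic interval $\lc\tau_{n-1},\tau_n\rc$ and then let $n\to\infty$, exactly as in the proof of Lemma~\ref{L:well-defined}; since $X[h]$ being special also forces $\sum|\Delta X_t-h(\Delta X_t)|$ to be locally integrable, this causes no trouble.

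The main obstacle I anticipate is not conceptual but a matter of rigorously justifying the manipulation $\int_0^\cdot D\xi_t(0)\,\d X_t = \int_0^\cdot D\xi_t(0)\,\d X[h]_t + \sum D\xi_t(0)(\Delta X_t - h(\Delta X_t))$ at the level of semimartingale integration when $\xi$ is complex-valued: one needs the integrand $D\xi(0)$ to be a well-defined locally bounded predictable process and the purely-discontinuous finite-variation process $\sum_{0<s\le\cdot}(\Delta X_s - h(\Delta X_s))$ to admit the obvious pathwise integral, with the two integrals adding up by linearity of the It\^o integral in the integrator. For the complex case this is handled, as elsewhere in the paper, by identifying $\Cx^d$ with $\R^{2d}$ and reducing to the real statement. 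Once that identification is in place the argument is entirely routine bookkeeping, so I would present the real-valued computation in full and dispatch the complex case with the usual one-line reduction.
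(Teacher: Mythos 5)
Your argument is correct and is essentially the paper's own proof: both establish the absolute summability by the triangle inequality (splitting off $\xi_t(\Delta X_t)-D\xi_t(0)\Delta X_t$, finite by Lemma~\ref{L:well-defined}, and $D\xi_t(0)(\Delta X_t - h(\Delta X_t))$, finite by local boundedness of $D\xi(0)$ and Definition~\ref{D:truncation}), and both obtain \eqref{eq:191016.1} by rewriting $\int_0^\cdot D\xi_t(0)\,\d X_t$ via the decomposition $X = X[h] + \sum_{0<s\le\cdot}(\Delta X_s - h(\Delta X_s))$ and regrouping the jump sums in the \'Emery formula \eqref{eq:190924.5}. Your extra remarks on localization and the complex case are sound but not points the paper dwells on.
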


\begin{proof}
First, the triangle inequality gives
\begin{align*} \sum_{0<t \leq \cdot} \left|\xi_t(\Delta X_t) - D \xi_t(0) h(\Delta X_t)\right|
\leq& \sum_{0<t \leq \cdot} \left|\xi_t(\Delta X_t) - D \xi_t(0)\Delta X_t\right|\\
&+\sum_{0<t \leq \cdot} \left|D \xi_t(0)\Delta X_t - D \xi_t(0) h(\Delta X_t)\right|\\
<&\infty, \qquad t \geq 0.
\end{align*}
Here the second sum is finite thanks to Lemma~\ref{L:well-defined} and the third due to the local boundedness of $D\xi(0)$ and Definition~\ref{D:truncation}. The identity
$$\int_0^\cdot D\xi_t(0)\d X_t =
\int_0^\cdot D\xi_t(0)\d X[h]_t +
\sum_{0<t \leq \cdot} \left(D \xi_t(0)\Delta X_t - D \xi_t(0) h(\Delta X_t)\right)$$
and the \'Emery formula \eqref{eq:190924.5} now yield the second part of the claim.
\end{proof}

We now introduce notation dealing with predictable compensators. If $X$ is a special semimartingale, we denote by $B^X$ its predictable compensator, i.e., the unique predictable finite variation process starting at zero such that $X-B^X$ is a local $\P$--martingale. 
If $\Qu$ is another probability measure absolutely continuous with respect to $\P$ and $X$ is $\Qu$--special, we denote the corresponding 
$\Qu$--compensator by $B^X_\Qu$.  Finally, we denote by $\nu^X$ the predictable $\P$--compensator of the jumps of $X$, i.e.,  for any compact interval $J\subset \R^d$\ (resp., $\Cx^d$) not containing the origin, $\nu^X([0, \cdot] \times J)$ is the predictable compensator of the finite variation process $ \sum_{0<t\leq\cdot}\indicator{\{\Delta X_t\in J\}}$. 

We shall say that a semimartingale is PII if it has independent increments.  The following result for PII semimartingales  relates drifts to expected values, and hence shall be very useful.  It is proved in \citet[Proposition~2.14 and Theorem~4.1]{crIII}. At this point, we remind the reader that the stochastic exponential $\Exp(X)$ of a one-dimensional semimartingale $X$ is given as the (unique) solution of the stochastic differential equation
\begin{align} \label{eq:191119}
	\Exp(X) = 1 + \int_0^\cdot \Exp(X)_{t-} \d X_t.
\end{align}  

\begin{theorem}\label{T:PII}
Assume $\xi\in\I_{0\R}\cup\I_{0\Cx}$ is compatible with $X$. If $\xi$ is deterministic and $X$ is PII, then $\xi\circ X$, too, is PII. Furthermore, if $\xi\circ X$ is special one has
\begin{alignat}{2}
\E[(\xi\circ X)_t] &= B^{\xi\circ X}_t,& \qquad t &\geq 0;   \label{eq:b2}\\
\E[\Exp(\xi\circ X)_t] &= \Exp\bigs(B^{\xi\circ X}\bigs)_t,& \qquad t &\geq 0. \label{eq:b3}
\end{alignat}
\end{theorem}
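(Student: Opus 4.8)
The plan is to prove the three assertions in turn: that $\xi\circ X$ inherits the independent-increments property, and then the two expectation identities \eqref{eq:b2} and \eqref{eq:b3}. The first I would establish directly from the \'Emery formula; the latter two I would deduce from two structural facts about \emph{special} semimartingales with independent increments, which I would isolate as a separate lemma (these are precisely the statements quoted as \citet[Proposition~2.14 and Theorem~4.1]{crIII}).

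For the PII claim, fix $0\le s<t$. By the \'Emery formula \eqref{eq:190924.5}, the increment $(\xi\circ X)_t-(\xi\circ X)_s$ equals $\int_s^t D\xi_u(0)\,\d X_u+\tfrac12\int_s^t\sum_{i,j=1}^d D^2_{ij}\xi_u(0)\,\d[X^{(i)},X^{(j)}]^c_u+\sum_{s<u\le t}(\xi_u(\Delta X_u)-D\xi_u(0)\Delta X_u)$, the last sum being absolutely convergent by Lemma~\ref{L:well-defined}. Since $X$ is PII, its second characteristic $[X,X]^c$ has a deterministic version \citep[Theorem~II.4.15]{js.03}, and since $\xi$ is deterministic the coefficients $D\xi(0)$ and $D^2\xi(0)$ carry no randomness; the remaining objects $\int_s^t D\xi_u(0)\,\d X_u$ and the jump sum depend, up to a $\P$--null set, measurably only on $(X_r-X_s)_{r\in[s,t]}$. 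Hence $(\xi\circ X)_t-(\xi\circ X)_s$ is a measurable functional of the increments of $X$ on $[s,t]$; running this over a finite grid $t_0<\cdots<t_m$ and using that the increment processes of $X$ over disjoint time windows are jointly independent, the increments of $\xi\circ X$ along the grid are jointly independent, i.e.\ $\xi\circ X$ is PII.

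For the two identities, put $Y=\xi\circ X$, which is PII by the previous step and special by hypothesis, with canonical decomposition $Y=B^Y+L$, $L$ a local martingale, $L_0=0$. I would invoke the PII lemma to get (i) that $B^Y$ admits a deterministic version and (ii) that the PII local martingale $L$ is in fact a true martingale whenever $Y_t$ is integrable — which is built into the statement of \eqref{eq:b2}, since there $B^{\xi\circ X}_t$ is a finite deterministic value. Then \eqref{eq:b2} follows at once by taking expectations in $Y_t=B^Y_t+L_t$. For \eqref{eq:b3} I would write the defining equation \eqref{eq:191119} of the stochastic exponential as $\Exp(Y)=1+\int_0^\cdot\Exp(Y)_{u-}\,\d B^Y_u+\int_0^\cdot\Exp(Y)_{u-}\,\d L_u$; the last term is a local martingale, hence a true martingale by the same PII mechanism (this, together with integrability of $\Exp(Y)_t$, is \citet[Theorem~4.1]{crIII}). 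Taking expectations and interchanging $\E$ with the Lebesgue--Stieltjes integral against the deterministic finite-variation process $B^Y$, the function $m(t)=\E[\Exp(Y)_t]$ satisfies the linear equation $m(t)=1+\int_0^t m(u-)\,\d B^Y_u$; uniqueness of the Dol\'eans--Dade equation then forces $m=\Exp(B^Y)$, which is \eqref{eq:b3}.

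I expect the genuine obstacle to be the passage from \emph{local} to \emph{true} martingale, both for $L$ itself and for $\int_0^\cdot\Exp(Y)_{u-}\,\d L_u$: this requires extracting enough integrability from the PII structure — deterministic characteristics together with a L\'evy--Khintchin-type control of the jump part — and then a clean justification of the interchange of expectation with the stochastic and Stieltjes integrals. Everything before that (the \'Emery decomposition, the locality in time of the increments, the deterministic second characteristic of $X$) is routine bookkeeping once the \'Emery formula is available. I would therefore factor out the special-PII statement as a standalone lemma, as is done in the companion paper, and keep the proof of Theorem~\ref{T:PII} itself to the two short reductions sketched above.
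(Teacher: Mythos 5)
The paper does not actually prove Theorem~\ref{T:PII} in this text: it defers entirely to \citet[Proposition~2.14 and Theorem~4.1]{crIII}, which are precisely the two facts you isolate as the crux (for a special PII semimartingale, the compensator is deterministic and the local martingale part --- respectively the local martingale $\int_0^\cdot\Exp(Y)_{u-}\,\d L_u$ --- is a true martingale, yielding the additive and multiplicative identities). Your sketch is therefore consistent with the paper's route, and the parts you supply beyond the citation --- the \'Emery-formula argument that increments of $\xi\circ X$ over disjoint windows are, for deterministic $\xi$ and deterministic $[X,X]^c$, measurable functionals of the corresponding increments of $X$, and the Dol\'eans--Dade uniqueness step deducing \eqref{eq:b3} from $m(t)=1+\int_0^t m(u-)\,\d B^{\xi\circ X}_u$ --- are sound.
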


Below we evaluate the right-hand-side of \eqref{eq:b2} and \eqref{eq:b3} for two important classes of stochastic processes.
\begin{definition}\label{D:Ito}
We say that $X$ is a discrete-time process if $X$ is constant on $[k-1, k)$ for each $k \in \N$. We say that $X$ is an It\^o semimartingale if for all truncation functions $h$ for $X$ there exists a triplet $(b^{X[h]}, c^X, F^X)$ of predictable processes such that 
$B^{X[h]} = \int_0^\cdot b^{X[h]} \d t$, $[X,X]^c = \int_0^\cdot c^X \d t$, and $\nu^X$ can be written in disintegrated form as 
$\nu^X = \int_0^\cdot \int F^X(\d x) \d t$. 
\qed
\end{definition}

\begin{theorem}\label{T:compensators}
Let $X$ be a semimartingale and let $h$ be a truncation function for $X$. Assume that $\xi\in\I_{0\R}\cup\I_{0\Cx}$ is compatible with $X$ and that $\xi\circ X$ is special. The following statements then hold. 
\begin{enumerate}[label={\rm(\roman{*})}, ref={\rm(\roman{*})}] 
\item\label{T:compensators.ii} If $X$ is a discrete-time process then $\xi\circ X$ is a discrete-time process and
\begin{alignat}{2}
B_t^{\xi\circ X}       &= \sum_{k=1}^{\lfloor t\rfloor}\E_{k-}[\xi_{k}(\Delta X_k)],& \qquad t &\geq 0; \nonumber\\
\Exp(B^{\xi\circ X})_t &= \prod_{k=1}^{\lfloor t\rfloor} \E_{k-}\left[1+\xi_k(\Delta X_k)\right],& \qquad t &\geq 0. \label{eq:K_discrete}
\end{alignat}
\item\label{T:compensators.i} If $X$ is an It\^o semimartingale then $\xi\circ X$ is an It\^o semimartingale and
\begin{align*}
b^{\xi\circ X} &= D\xi(0) b^{X[h]} + \frac{1}{2} \sum_{i,j = 1}^d D^2_{ij} \xi(0)c^X_{ij} 
+ \int_{\R^d}\left(\xi(x) - D \xi(0) h(x)\right)F^X(\d x);\\
B^{\xi\circ X} &= \int_0^\cdot b_t^{\xi\circ X}\d t;\qquad \Exp(B^{\xi\circ X}) = \exp\left(\int_0^\cdot b_t^{\xi\circ X}\d t\right).
\end{align*}
\end{enumerate}
\end{theorem}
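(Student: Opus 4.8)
The plan is to run the \'Emery formula with truncation, Proposition~\ref{P:Emery truncation}, and then read off the predictable $\P$--compensator of $\xi\circ X$ term by term. That formula writes $\xi\circ X$ as the sum of a stochastic integral $\int_0^\cdot D\xi_t(0)\d X[h]_t$, a continuous finite-variation term $\frac12\int_0^\cdot\sum_{i,j}D^2_{ij}\xi_t(0)\d[X^{(i)},X^{(j)}]^c_t$, and an absolutely summable pure-jump process $\sum_{0<t\le\cdot}(\xi_t(\Delta X_t)-D\xi_t(0)h(\Delta X_t))$. I would compensate each of these separately: the first has compensator $\int_0^\cdot D\xi_t(0)\d B^{X[h]}_t$, because $D\xi(0)$ is locally bounded predictable and $X[h]-B^{X[h]}$ is a local martingale; the second is already predictable, hence equals its own compensator; and the third, being special — since $\xi\circ X$ is assumed special and the first two pieces have just been accounted for — hence of locally integrable variation, has compensator $\int_0^\cdot\!\int(\xi_t(x)-D\xi_t(0)h(x))\nu^X(\d t,\d x)$ by \citet[Theorem~II.1.8]{js.03}. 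Summing and using linearity of the compensator then yields $B^{\xi\circ X}$.

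For the discrete-time part, note that $[X,X]^c=0$ and, since $X$ is constant on each $[k-1,k)$, so is $X[h]$; the \'Emery formula then shows $\xi\circ X$ is a discrete-time process whose only increments are $\xi_k(\Delta X_k)$ at integer times (using $\xi_t(0)=0$). The compensator of a special discrete-time semimartingale is the elementary sum $\sum_{k=1}^{\lfloor t\rfloor}\E_{k-}[\xi_k(\Delta X_k)]$, the conditional expectations being well-defined because specialness makes the jumps locally integrable. Since $B^{\xi\circ X}$ is then a pure-jump finite-variation process starting at $0$, solving the defining equation \eqref{eq:191119} jump by jump gives $\Exp(B^{\xi\circ X})_t=\prod_{k=1}^{\lfloor t\rfloor}(1+\Delta B^{\xi\circ X}_k)=\prod_{k=1}^{\lfloor t\rfloor}\E_{k-}[1+\xi_k(\Delta X_k)]$, which is \eqref{eq:K_discrete}.

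For the It\^o part, I would substitute $B^{X[h]}=\int_0^\cdot b^{X[h]}\d t$, $[X^{(i)},X^{(j)}]^c=\int_0^\cdot c^X_{ij}\d t$, and $\nu^X=\int_0^\cdot\!\int F^X(\d x)\d t$ into the three compensators above; adding them gives $B^{\xi\circ X}=\int_0^\cdot b^{\xi\circ X}_t\d t$ with exactly the stated density $b^{\xi\circ X}$, and, $B^{\xi\circ X}$ being continuous, its stochastic exponential is $\exp(\int_0^\cdot b^{\xi\circ X}_t\d t)$. It then remains to check $\xi\circ X$ is itself an It\^o semimartingale: its continuous martingale part is $D\xi(0)$ integrated against that of $X$, so $[\xi\circ X,\xi\circ X]^c=\int_0^\cdot D\xi_t(0)c^X_t D\xi_t(0)^\top\d t$; its jump compensator $\nu^{\xi\circ X}$ is the $(\omega,t)$--wise image of $F^X_t$ under $x\mapsto\xi_t(x)$, hence again disintegrated (cf.\ Example~\ref{ex: characteristics}); and the absolute continuity of $B^{\xi\circ X}$ together with this disintegration makes $B^{(\xi\circ X)[g]}$ absolutely continuous for every truncation function $g$ of $\xi\circ X$.

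The main obstacle is the third compensator: verifying that $\sum_{0<t\le\cdot}(\xi_t(\Delta X_t)-D\xi_t(0)h(\Delta X_t))$ really is of locally integrable variation and that its summand is $\nu^X$--integrable in the precise sense needed to invoke \citet[Theorem~II.1.8]{js.03}. Proposition~\ref{P:Emery truncation} supplies a.s.\ finiteness of the total variation, and the local bound $|\xi_t(x)-D\xi_t(0)x|\le K|x|^2$ near the origin from Definition~\ref{D:core}, combined with the specialness of $\xi\circ X$, upgrades this to local integrability; one must also be careful that $\xi$ is genuinely predictable in time, so that $F^{\xi\circ X}$ and $\nu^{\xi\circ X}$ are described through the section maps $x\mapsto\xi_t(x)$ rather than a single deterministic transformation. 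The remaining work is routine bookkeeping with compensators and with the structure of discrete-time and absolutely continuous finite-variation processes.
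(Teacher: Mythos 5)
Your proof is correct and follows essentially the same route as the paper: apply the Émery formula with truncation (Proposition~\ref{P:Emery truncation}) and compensate the three terms separately, obtaining $B^{\xi\circ X} = \int_0^\cdot D\xi_t(0)\,\d B^{X[h]}_t + \frac{1}{2}\int_0^\cdot \sum_{i,j} D^2_{ij}\xi_t(0)\,\d[X^{(i)},X^{(j)}]^c_t + \int_0^\cdot\int_{\R^d}(\xi_t(x)-D\xi_t(0)h(x))\,\nu^X(\d t,\d x)$, from which both specializations follow. The paper's proof is a one-line version of exactly this argument; your additional care with the local integrability of the jump sum and the verification that $\xi\circ X$ is again a discrete-time, respectively It\^o, semimartingale fills in details the paper leaves implicit.
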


\begin{proof}
By \eqref{eq:191016.1}, we have
\[
	\begin{split}
B^{\xi\circ X} = \int_0^\cdot D\xi_t(0)\d B^{X[h]}_t 
&+ \frac{1}{2}\int_0^\cdot \sum_{i,j = 1}^d D^2_{ij} \xi_t(0)\, \d\bigs[X^{(i)},X^{(j)}\bigs]^c_t \\
								&+  \int_0^\cdot   \int_{\R^d} \left(\xi_t(x) - D \xi_t(0) h(x)\right) \nu^X (\d t, \d x),
\end{split}		
\]
yielding the statement.
\end{proof}

\setstretch{1.2}
\section{Change of measure}\label{C}
This appendix collects results on changes of measures.

\begin{theorem}[Girsanov's theorem for absolutely continuous probability measures]\label{T:Girsanov}
 Let $N$ be a $\P$--semimartingale such that 
$$M=\Exp(N)$$ 
is a uniformly integrable $\P$--martingale with $M\ge 0$. Define the probability measure $\Qu$~by 
$$\frac{\d\Qu}{\d \P} =M_\infty.$$ 
For a $\Qu$--semimartingale $V$ and a $\P$--semimartingale $V_{\uparrow}$, $\Qu$--indistinguishable from $V$, the following are equivalent.
\begin{enumerate}[label={\rm(\arabic{*})}, ref={\rm(\arabic{*})}]
\item\label{P:180706:i} $V$ is $\Qu$--special.
\item\label{P:180706:ii} $V_\uparrow +[V_\uparrow,N]$ is $\P$--special.
\end{enumerate} 
If either condition holds then the corresponding compensators are equal, i.e., 
\begin{align*}
	B^V_\Qu = B^{V_\uparrow+[V_\uparrow,N]}, \qquad \text{$\Qu$--almost surely}.
\end{align*}
\end{theorem}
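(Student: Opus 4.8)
The plan is to reduce the assertion to a statement about local martingales, apply the classical Girsanov--Lenglart change-of-measure theorem, and carry the non-equivalence of $\Qu$ and $\P$ through the observation that $\{M=0\}\cup\{M_-=0\}$ is $\Qu$-evanescent; compare \citet{crIII}. Two ingredients drive the argument: (i) the set $\{M=0\}\cup\{M_-=0\}$ is $\Qu$-evanescent, so inversions of $M$ and $M_-$ are legitimate under $\Qu$; and (ii) since $M=\Exp(N)$ solves $\d M_t=M_{t-}\,\d N_t$, the Girsanov drift correction is exactly the covariation with $N$.

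For (i), put $\tau=\inf\{t\ge 0:M_t\wedge M_{t-}=0\}$. Since $M$ is a nonnegative $\P$-martingale it is absorbed at $0$ (minimum principle for nonnegative supermartingales), so $M_\infty=0$ on $\{\tau<\infty\}$ and hence $\Qu(\tau<\infty)=\E[M_\infty\indicator{\{\tau<\infty\}}]=0$. Thus $M$ and $M_-$ are $\Qu$-a.s.\ strictly positive, $1/M$ and $1/M_-$ are $\Qu$-a.s.\ well-defined with paths bounded away from $0$ on compacts, and any process carried by $\lc\tau,\infty\lc$ is $\Qu$-evanescent. For (ii), from $\d M_t=M_{t-}\,\d N_t$ one gets $\d[M,Y]_t=M_{t-}\,\d[N,Y]_t$ for every $\P$-semimartingale $Y$, and integration by parts yields
\begin{equation*}
MY=M_0Y_0+\int_0^\cdot M_{t-}\,\d\bigl(Y+[Y,N]\bigr)_t+\int_0^\cdot Y_{t-}\,\d M_t,
\end{equation*}
where the last integral is a $\P$-local martingale because $Y_-$ is locally bounded. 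Hence $MY$ is a $\P$-local martingale if and only if $\int_0^\cdot M_{t-}\,\d(Y+[Y,N])_t$ is.

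Now I would assemble the equivalences. By definition, $V$ is $\Qu$-special with $\Qu$-compensator $B$ exactly when $B$ is predictable of finite variation and $V-B$ is a $\Qu$-local martingale; as $V_\uparrow-B$ is a $\P$-semimartingale $\Qu$-indistinguishable from $V-B$, this holds iff $V_\uparrow-B$ is a $\Qu$-local martingale. By the Girsanov--Lenglart criterion (classical when $\Qu\sim\P$, \citep[Chapter~III]{js.03}; in the absolutely continuous case obtained by localising along $\tau_n=\inf\{t:M_t\wedge M_{t-}\le 1/n\}$ and invoking (i)), $V_\uparrow-B$ is a $\Qu$-local martingale iff $M(V_\uparrow-B)$ is a $\P$-local martingale. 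Applying the identity of (ii) with $Y=V_\uparrow-B$, together with $[V_\uparrow-B,N]=[V_\uparrow,N]-[B,N]$, the identity $\int_0^\cdot M_{t-}\,\d[B,N]_t=[B,M]=\sum_{0<s\le\cdot}\Delta B_s\,\Delta M_s$, and the product rule for $BM$, this becomes equivalent to $\int_0^\cdot M_{t-}\,\d\bigl(V_\uparrow+[V_\uparrow,N]-B\bigr)_t$ being a $\P$-local martingale. Reading off the canonical decomposition, the last condition forces $V_\uparrow+[V_\uparrow,N]$ to be $\P$-special with predictable finite-variation part differing from $B$ only by a process that, multiplied by $M_-$, becomes a predictable finite-variation $\P$-local martingale and therefore vanishes; that difference is thus carried by $\{M_-=0\}$ and is $\Qu$-evanescent. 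This gives both the equivalence \ref{P:180706:i}$\Leftrightarrow$\ref{P:180706:ii} and $B^{V_\uparrow+[V_\uparrow,N]}=B=B^V_\Qu$, $\Qu$-almost surely; the reverse implication follows by running the chain backwards.

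The main obstacle is precisely the bookkeeping forced by absolute (rather than mutual) continuity: $M$ may hit $0$, so $1/M$ is not a $\P$-semimartingale and one cannot pass between $\P$- and $\Qu$-local-martingale statements by dividing by the density. The proof must be organised so that every identity involving $M_-$ is verified under $\P$, every inversion of $M$ or $M_-$ is performed under $\Qu$ after localising along $\tau_n$, and the two pictures are matched only through the $\Qu$-evanescence of $\{M=0\}\cup\{M_-=0\}$. A secondary point is that the predictable finite-variation part one extracts under $\P$ need only agree with $B^V_\Qu$ up to a $\Qu$-evanescent discrepancy; identifying that discrepancy uses that a predictable finite-variation $\P$-local martingale started at $0$ is identically $0$.
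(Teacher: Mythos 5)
First, a point of reference: the paper does not actually prove Theorem~\ref{T:Girsanov} --- it declares the proof classical and defers to Proposition~5.2 of \citet{crIII} --- so your attempt can only be measured against the standard Girsanov--Lenglart machinery it invokes. Your overall strategy (exploit $\d M_t=M_{t-}\,\d N_t$ to get $\d[M,Y]_t=M_{t-}\,\d[N,Y]_t$, integrate by parts, dispose of $[B,M]$ via Yoeurp's lemma, and pass between measures through ``$Z$ is a $\Qu$--local martingale iff $MZ$ is a $\P$--local martingale'') is indeed that classical route, and it correctly delivers the implication \ref{P:180706:ii}$\Rightarrow$\ref{P:180706:i} together with the identity $B^V_\Qu=B^{V_\uparrow+[V_\uparrow,N]}$, $\Qu$--a.s., whenever both sides exist.

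The gap is in the direction \ref{P:180706:i}$\Rightarrow$\ref{P:180706:ii}. Your chain ends with the assertion that $\int_0^\cdot M_{t-}\,\d\bigl(V_\uparrow+[V_\uparrow,N]-B\bigr)_t$ is a $\P$--local martingale, from which you ``read off'' that $V_\uparrow+[V_\uparrow,N]$ is $\P$--special. That step does not follow. The integrand $M_-$ vanishes identically on $\lc\tau,\infty\lc$ with $\tau=\inf\{t\ge0:M_t\wedge M_{t-}=0\}$, a set that is $\Qu$--evanescent but, precisely because $\Qu$ is only absolutely continuous with respect to $\P$, need \emph{not} be $\P$--evanescent; the integral therefore carries no information about the increments of $V_\uparrow+[V_\uparrow,N]$ after $\tau$. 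All one can legitimately extract (integrating $\indicator{\lc 0,\tau_n\rc}M_-^{-1}$, which is bounded and predictable) is that $(V_\uparrow+[V_\uparrow,N]-B)^{\tau_n}$ is a $\P$--local martingale for each $n$, i.e.\ $\P$--specialness on $\lc0,\tau\lc$ only. Specialness on all of $[0,\infty)$ is a strictly stronger, purely $\P$--a.s.\ statement about a stretch of path that hypothesis \ref{P:180706:i} cannot see: if, say, $V_\uparrow$ vanishes on $\lc0,\tau\rc$ and follows a finite-variation pure-jump process with non-integrable jumps on $\lc\tau,\infty\lc$ while $N$ is constant there, then $V$ is $\Qu$--indistinguishable from $0$ and hence trivially $\Qu$--special, yet $V_\uparrow+[V_\uparrow,N]=V_\uparrow$ is not $\P$--special. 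Closing this direction therefore requires either replacing $V_\uparrow$ by a version controlled on $\lc\tau,\infty\lc$ (the freedom of choosing $V_\uparrow$ only up to $\Qu$--indistinguishability permits this) or a separate argument for the post-$\tau$ behaviour; your remark that the discrepancy ``is carried by $\{M_-=0\}$ and is $\Qu$--evanescent'' identifies the right set but draws the wrong conclusion, since $\Qu$--evanescence is exactly what is not good enough for a claim made under $\P$.
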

\noindent The proof is quite classical and we do not reproduce it here. For details see \citet[Proposition~5.2]{crIII}.

\begin{theorem}[Girsanov's theorem -- representations]\label{T:Girsanov2}\leavevmode
Assume $\eta,\xi \in \I_{0\R}$ (resp., $\I_{0\Cx}$) with $\eta \geq -1$ are compatible with a semimartingale  $X$ such that 
$\eta\circ X$ is special and $\Delta B^{\eta\circ X} > -1$. Assume further that 
$M=\sfrac{\Exp(\eta\circ X)}{\Exp\left(B^{\eta\circ X}\right)}$ is a real-valued uniformly integrable $\P$--martingale and define the probability measure $\Qu$~by 
$$\frac{\d\Qu}{\d \P} =M_\infty.$$  
Assume also that $\xi \circ X$ is $\Qu$--special. Then the following statements hold.
\begin{enumerate}[label={\rm(\roman{*})}, ref={\rm(\roman{*})}]
\item\label{T:Girsanov2:i} If $X$ is a discrete-time process under $\P$ then $\xi\circ X$ is a discrete-time process under $\Qu$ and
\begin{align*}
B_{\Qu,t}^{\xi\circ X}&
= \sum_{k=1}^{\lfloor t\rfloor}\E_{k-}\Bigg[\xi_{k}(\Delta X_k)\frac{1+\eta_{k}(\Delta X_k)}{1+\E_{k-}[\eta_{k}(\Delta X_k)]}\Bigg],\qquad &t\geq 0.
\end{align*}
\item\label{T:Girsanov2:ii} If $X$ is an It\^o $\P$--semimartingale then $\xi\circ X$ is an It\^o $\Qu$--semimartingale and
\begin{align*}
b^{\xi\circ X}_\Qu = b^{(1+\eta)\xi\circ X}&= D\xi(0) b^{X[h]} 
+ \frac{1}{2} \sum_{i,j = 1}^d \left(D^2_{ij} \xi(0)+2D_i \xi(0)D_j\eta(0)\right)c^X_{ij} \\
&\qquad\qquad\qquad + \int_{\R^d}\left(\xi(x)(1+\eta(x)) - D \xi(0) h(x)\right)F^X(\d x).\\
\end{align*}
\end{enumerate}
\end{theorem}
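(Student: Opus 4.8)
The plan is to deduce both parts from the abstract Girsanov statement of Theorem~\ref{T:Girsanov} together with the compensator formulae of Theorem~\ref{T:compensators}, using the composition rule of Corollary~\ref{C:composition} (and the quadratic-covariation representation of Example~\ref{E:180809}) to recognise the Girsanov correction term as a represented process. One first records the elementary facts $\eta(0)=\xi(0)=0$ and $M\ge0$: indeed $\Exp(\eta\circ X)\ge0$ because $\Delta(\eta\circ X)=\eta(\Delta X)\ge-1$, while $\Exp(B^{\eta\circ X})>0$ because $B^{\eta\circ X}$ has finite variation and $\Delta B^{\eta\circ X}>-1$.

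The first genuine step is to write $M=\Exp(N)$ for an explicit $N$, so that Theorem~\ref{T:Girsanov} can be invoked. In case~\ref{T:Girsanov2:ii}, $X$ It\^o and $\eta\circ X$ special force, via Theorem~\ref{T:compensators}, that $B^{\eta\circ X}=\int_0^\cdot b^{\eta\circ X}_t\,\d t$ is continuous; a one-line integration by parts — the bracket of $\Exp(\eta\circ X)$ with the continuous finite-variation process $\exp(-B^{\eta\circ X})$ vanishes — then gives $M=\Exp(\eta\circ X)\exp(-B^{\eta\circ X})=\Exp\bigs(\eta\circ X-B^{\eta\circ X}\bigs)$, so $N=\eta\circ X-B^{\eta\circ X}$. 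In case~\ref{T:Girsanov2:i}, Theorem~\ref{T:compensators} makes $\eta\circ X$ and $B^{\eta\circ X}$ discrete-time processes jumping at the integers by $\eta_k(\Delta X_k)$ and $\E_{k-}[\eta_k(\Delta X_k)]$ respectively; reading off the jump ratio of $M=\Exp(\eta\circ X)/\Exp(B^{\eta\circ X})$ on the $\Qu$-full set $\{M_{k-1}>0\}$ shows $M=\Exp(N)$ for the discrete-time $N$ with $\Delta N_k=\bigs(\eta_k(\Delta X_k)-\E_{k-}[\eta_k(\Delta X_k)]\bigs)/\bigs(1+\E_{k-}[\eta_k(\Delta X_k)]\bigs)$.

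Next, one applies Theorem~\ref{T:Girsanov} with $V=V_\uparrow=\xi\circ X$ (no lifting is needed since $\xi\circ X$ is already a $\P$-semimartingale, and it is $\Qu$-special by hypothesis), obtaining that $\xi\circ X+[\xi\circ X,N]$ is $\P$-special and $B^{\xi\circ X}_\Qu=B^{\xi\circ X+[\xi\circ X,N]}$, $\Qu$-a.s. It then remains to identify the correction term. In case~\ref{T:Girsanov2:ii}, $[\xi\circ X,N]=[\xi\circ X,\eta\circ X]$ because $B^{\eta\circ X}$ is continuous of finite variation; applying Example~\ref{E:180809} to each pair $((\xi\circ X)^{(k)},\eta\circ X)$ together with Corollary~\ref{C:composition} (using $(\xi^{(k)},\eta)\circ X$) gives $[\xi\circ X,\eta\circ X]=(\eta\xi)\circ X$, and by linearity of the \'Emery formula in the representing function $\xi\circ X+[\xi\circ X,N]=\bigs((1+\eta)\xi\bigs)\circ X$; moreover $(1+\eta)\xi$, being the composition of the polynomial $(v,s)\mapsto(1+s)v$ with $(\xi,\eta)$, lies in $\I_{0\R}^{d,n}$ (resp., $\I_{0\Cx}^{d,n}$) and is compatible with $X$ by Corollary~\ref{C:composition}. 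In case~\ref{T:Girsanov2:i}, $[\xi\circ X,N]$ is discrete-time with jumps $\xi_k(\Delta X_k)\Delta N_k$, so $\xi\circ X+[\xi\circ X,N]$ is discrete-time with jumps $\xi_k(\Delta X_k)\bigs(1+\eta_k(\Delta X_k)\bigs)/\bigs(1+\E_{k-}[\eta_k(\Delta X_k)]\bigs)$.

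Finally, in case~\ref{T:Girsanov2:i} the $\P$-compensator of a $\P$-special discrete-time process has jumps equal to $\E_{k-}$ of the jumps of the process (Doob decomposition), which yields precisely the stated sum; $\xi\circ X$ is a discrete-time process under $\Qu$ since that is a pathwise property. In case~\ref{T:Girsanov2:ii}, Theorem~\ref{T:compensators} applied to $(1+\eta)\xi$ delivers $B^{((1+\eta)\xi)\circ X}=\int_0^\cdot b^{((1+\eta)\xi)\circ X}_t\,\d t$ with the usual three-term drift; substituting the product-rule identities $D\bigs((1+\eta)\xi\bigs)(0)=D\xi(0)$ and $D^2_{ij}\bigs((1+\eta)\xi\bigs)(0)=D^2_{ij}\xi(0)+D_i\xi(0)D_j\eta(0)+D_i\eta(0)D_j\xi(0)$ and using the symmetry of $c^X$ to merge the last two terms into $2D_i\xi(0)D_j\eta(0)$ produces the asserted $b^{\xi\circ X}_\Qu$; that $\xi\circ X$ is an It\^o $\Qu$-semimartingale follows because its $\Qu$-compensator is absolutely continuous, $[\xi\circ X,\xi\circ X]^c$ is measure-independent, and the $\Qu$-compensator of its jumps stays disintegrable in $\d t$ by Girsanov's theorem for random measures (see \citealp{crIII}). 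The complex-valued case is verbatim identical, with all derivatives interpreted as complex derivatives. The one delicate point in this scheme is the passage from the multiplicative form $M=\Exp(\eta\circ X)/\Exp(B^{\eta\circ X})$ to the additive $N=\Log M$ in the discrete-time case, where the division by $1+\E_{k-}[\eta_k(\Delta X_k)]$ at each jump and the restriction to $\{M_{k-1}>0\}$ must be handled with care; once the bracket identity $[\xi\circ X,\eta\circ X]=(\eta\xi)\circ X$ is in hand, everything else is routine bookkeeping.
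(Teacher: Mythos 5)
Your proof is correct and follows essentially the same route as the paper: write $M=\Exp(N)$, apply Theorem~\ref{T:Girsanov} with $V=\xi\circ X$, identify $V+[V,N]$ as a represented process via Example~\ref{E:180809} and Corollary~\ref{C:composition}, and finish with Theorem~\ref{T:compensators}. The only cosmetic difference is that the paper obtains $N$ in both cases at once via the single representation $N=\bigl(\tfrac{1+\eta(\id_1)}{1+\id_2}-1\bigr)\circ(X,B^{\eta\circ X})$, yielding $V+[V,N]=\xi\tfrac{1+\eta}{1+\Delta B^{\eta\circ X}}\circ X$, whereas you split into the continuous-compensator and discrete-time cases and compute $N$ and the bracket separately in each.
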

\vspace{-1.0cm}
\begin{proof}
In this proof all predictable functions appearing in representations are in $\I_{0\R}$ (resp., $\I_{0\Cx}$). By a standard calculation, the process 
$M=\sfrac{\Exp(\eta\circ X)}{\Exp\left(B^{\eta\circ X}\right)}$ satisfies $M = \Exp(N)$ with
$$N =\left(\frac{1+\eta(\id_1)}{1+\id_2}-1\right)\circ (X,B^{\eta\circ X}).$$   
From the representation of quadratic covariation in Example~\ref{E:180809} we then obtain
\begin{equation*}
V+[V,N] = \xi(\id_1)\frac{1+\eta(\id_1)}{1+\id_2}\circ (X,B^{\eta\circ X}) = \xi\frac{1+\eta}{1+\Delta B^{\eta\circ X}}\circ X.
\end{equation*}
The rest follows from the  general Girsanov theorem (Theorem~\ref{T:Girsanov}) and the drift formulae in Theorem~\ref{T:compensators}.
\end{proof}

\begin{corollary}\label{C:compensatorsPIIQ}
With the notation and assumptions as in Theorem~\ref{T:Girsanov2} above, if $X$ is PII under $\P$ stopped at a finite time and $\eta$ is deterministic then $M=\sfrac{\Exp(\eta\circ X)}{\Exp\left(B^{\eta\circ X}\right)}$ is a uniformly integrable martingale. Furthermore, if $\xi$, too, is deterministic then $\xi\circ X$ is PII under $\Qu$ and the following statements hold for all $t\geq 0$.
\begin{enumerate}[label={\rm(\roman{*})}, ref={\rm(\roman{*})}]
\item\label{C:compensatorsPIIQ:i} If $X$ is a discrete-time process under $\P$ then
\begin{align*}
\E^\Qu[\Exp(\xi\circ X)_t]&= \prod_{k=1}^{\lfloor t\rfloor}\E\Bigg[(1+\xi_{k}(\Delta X_k))\frac{1+\eta_{k}(\Delta X_k)}{1+\E[\eta_{k}(\Delta X_k)]}\Bigg].
\end{align*}
\item\label{C:compensatorsPIIQ:ii} If $X$ is an It\^o $\P$--semimartingale then
\begin{align*}
\E^\Qu[\Exp(\xi\circ X)_t] = \exp\left(\int_0^t b_u^{(1+\eta)\xi\circ X}\d u\right).
\end{align*}
\end{enumerate}
\end{corollary}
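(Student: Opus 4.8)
The plan is to build the statement out of three results already on the table: Theorem~\ref{T:PII} (turning drifts of PII processes into expectations), Theorem~\ref{T:Girsanov2} (identifying the $\Qu$--compensator of $\xi\circ X$), and Theorem~\ref{T:compensators} (writing that compensator out). First I would check that the hypotheses of Theorem~\ref{T:Girsanov2} are met, i.e.\ that $M=\sfrac{\Exp(\eta\circ X)}{\Exp(B^{\eta\circ X})}$ is a uniformly integrable $\P$--martingale. Since $\eta$ is deterministic and $X$ is PII, Theorem~\ref{T:PII} gives that $\eta\circ X$ is PII; being special by assumption, its compensator $B^{\eta\circ X}$, and hence $\Exp(B^{\eta\circ X})$, is deterministic. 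As $\eta\ge-1$ the process $\Exp(\eta\circ X)$ is nonnegative, and $M$ is a nonnegative local martingale by the multiplicative--compensator construction behind the proof of Theorem~\ref{T:Girsanov2} (so in particular a supermartingale); applying \eqref{eq:b3} to $\eta\circ X$ yields $\E[\Exp(\eta\circ X)_t]=\Exp(B^{\eta\circ X})_t$, i.e.\ $\E[M_t]=1$ for every $t$, which upgrades the supermartingale $M$ to a true martingale. Because $X$, and therefore $M$, is stopped at a finite time $T$, $M=M^T$ is closed by $M_T\in\Lp1$ and is thus uniformly integrable; Theorem~\ref{T:Girsanov2} now applies.

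Next I would establish that $\xi\circ X$ is PII under $\Qu$. It is already PII under $\P$ by Theorem~\ref{T:PII} (as $\xi$ is deterministic), so only the transfer to $\Qu$ is at issue. Fix $s<t$; since everything is stopped at $T$ I may take $t\le T$, and then the Bayes rule gives, for bounded Borel $g$, that $\E^{\Qu}[g((\xi\circ X)_t-(\xi\circ X)_s)\mid\sigalg F_s]$ equals $\E^{\P}[(\sfrac{M_t}{M_s})\,g((\xi\circ X)_t-(\xi\circ X)_s)\mid\sigalg F_s]$ on the $\Qu$--full set $\{M_s>0\}$. By the \'Emery formula \eqref{eq:191016.1} the increment $(\xi\circ X)_t-(\xi\circ X)_s$ is a measurable function of the increments of $X$ on $(s,t]$ (the increment of $X[h]$, the increment of $[X,X]^c$, and the jumps of $X$ in $(s,t]$), and the ratio $\sfrac{M_t}{M_s}$, being $\sfrac{\Exp(\eta\circ X)_t}{\Exp(\eta\circ X)_s}$ times a deterministic factor, depends only on those same increments. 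As $X$ is PII, the corresponding $\sigma$--algebra is $\P$--independent of $\sigalg F_s$, so the conditional expectation reduces to a deterministic number; hence the $\Qu$--conditional law of $(\xi\circ X)_t-(\xi\circ X)_s$ given $\sigalg F_s$ is deterministic, i.e.\ $\xi\circ X$ is PII under $\Qu$. In the discrete--time case this argument is just the mutual independence of the $\Delta X_k$.

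With $\xi\circ X$ PII and, by assumption, $\Qu$--special, I would apply Theorem~\ref{T:PII} under $\Qu$ (with $\xi\circ X$ itself in the role of the PII process) to get $\E^{\Qu}[\Exp(\xi\circ X)_t]=\Exp(B^{\xi\circ X}_{\Qu})_t$; it then remains to evaluate the right-hand side. For part~\ref{C:compensatorsPIIQ:i}, Theorem~\ref{T:Girsanov2}\ref{T:Girsanov2:i} identifies $B^{\xi\circ X}_{\Qu}$ as the displayed sum, the inner conditional expectations $\E_{k-}$ collapse to $\E$ because $\Delta X_k$ is independent of the past $\sigalg F_{k-1}$, and since the stochastic exponential of a discrete-time process is the running product of $1+(\text{jump})$, the elementary identity $1+\E\!\left[\xi_k(\Delta X_k)\tfrac{1+\eta_k(\Delta X_k)}{1+\E[\eta_k(\Delta X_k)]}\right]=\E\!\left[(1+\xi_k(\Delta X_k))\tfrac{1+\eta_k(\Delta X_k)}{1+\E[\eta_k(\Delta X_k)]}\right]$ turns $\Exp(B^{\xi\circ X}_{\Qu})_t$ into the claimed product. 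For part~\ref{C:compensatorsPIIQ:ii}, Theorem~\ref{T:Girsanov2}\ref{T:Girsanov2:ii} gives $b^{\xi\circ X}_{\Qu}=b^{(1+\eta)\xi\circ X}$ and Theorem~\ref{T:compensators}\ref{T:compensators.i} gives $B^{\xi\circ X}_{\Qu}=\int_0^\cdot b^{(1+\eta)\xi\circ X}_u\,\d u$ with $\Exp(B^{\xi\circ X}_{\Qu})=\exp\!\left(\int_0^\cdot b^{(1+\eta)\xi\circ X}_u\,\d u\right)$, which is exactly the assertion.

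The main obstacle is the PII--under--$\Qu$ step: both the identity $\E[M_t]=1$ (needed to promote $M$ from a local to a genuine martingale) and the independence transfer hinge on recognising the functionals $\sfrac{M_t}{M_s}$ and $(\xi\circ X)_t-(\xi\circ X)_s$ as measurable with respect to the $\sigma$--algebra generated by the increments of $X$ on $(s,t]$, which the \'Emery formula makes transparent but which should be spelled out with care; everything else is bookkeeping with the cited theorems plus one line of algebra. An alternative to the conditional-law argument, at least in the It\^o case, is to read off from Theorem~\ref{T:Girsanov2}\ref{T:Girsanov2:ii} that all $\Qu$--characteristics of $\xi\circ X$ are deterministic and invoke the characterisation of independent-increments processes by deterministic characteristics.
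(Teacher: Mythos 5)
Your proposal is correct and follows essentially the same route as the paper's own (very terse) proof: use the PII property and Theorem~\ref{T:PII} (equation~\eqref{eq:b3}) to upgrade the nonnegative local martingale $M$ with constant expectation $1$ to a uniformly integrable martingale, transfer the PII property to $\Qu$ via Girsanov, and then combine Theorems~\ref{T:PII}, \ref{T:compensators}, and \ref{T:Girsanov2} to evaluate $\E^\Qu[\Exp(\xi\circ X)_t]=\Exp(B^{\xi\circ X}_\Qu)_t$. You simply spell out the details (the Bayes-rule independence transfer, the collapse of $\E_{k-}$ to $\E$, and the one-line algebraic identity in the discrete case) that the paper leaves implicit.
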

\begin{proof}
	First note that if $\eta$ is deterministic and if $X$ is PII, then Example~\ref{ex: characteristics} yields that $\eta \circ X$ is again PII. Next, Theorem~\ref{T:PII} yields the martingale property of $M$. The PII property of $X$ under $\Qu$ follows from Girsanov's theorem. The argument then follows from Theorems~\ref{T:PII}, \ref{T:compensators}, and \ref{T:Girsanov2}.
\end{proof}

\end{document}